\title{\Huge$\,$\\[-2.75ex]
{Distributed Multi-User Secret Sharing}\\[0.50ex]}
\author{\large%
Mahdi Soleymani 
and 
Hessam Mahdavifar,\,\,\IEEEmembership{Member,~IEEE}\\

\thanks{%
 The material in this paper was presented in part at the IEEE International Symposium on Information Theory in June 2018.
}
\thanks{This work was supported by the National Science Foundation
under grants CCF--1763348, CCF--1909771, and CCF--1941633.}
\thanks{M.\ Soleymani and H.\ Mahdavifar are with the Department of Electrical Engineering and Computer Science, University of Michigan, Ann Arbor, MI 48104 (email: mahdy@umich.edu and hessam@umich.edu).}
\thanks{Copyright (c) 2017 IEEE. Personal use of this material is permitted.  However, permission to use this material for any other purposes must be obtained from the IEEE by sending a request to pubs-permissions@ieee.org.
}
}
\newtheorem{exmp}{Example}[section]
\newtheorem{theorem}{{Theorem}}
\newtheorem{lemma}[theorem]{{Lemma}}
\newtheorem{corollary}[theorem]{{Corollary}}
\newtheorem{definition}{{Definition}}
\newcommand{\cD}{{\cal D}}
\newcommand{\cE}{{\cal E}} 
\newcommand{\cF}{{\cal F}}
\newcommand{\cJ}{{\cal J}}
\DeclareMathAlphabet{\mathbfsl}{OT1}{ppl}{b}{it} %{OT1}{cmr}{bx}{it}
\newcommand{\bA}{\mathbfsl{A}}
\newcommand{\by}{\mathbfsl{y}} 
\newcommand{\bs}{\mathbfsl{s}}
\newcommand{\ceil}[1]{\left\lceil #1 \right\rceil}
\newcommand{\floor}[1]{\left\lfloor #1 \right\rfloor}
\newcommand*{\rom}[1]{\expandafter\romannumeral #1}
\newcommand{\AlignFootnote}[1]{%
  \ifmeasuring@
  \else
    \iffirstchoice@
      \footnote{#1}%
    \fi
  \fi}
\newcommand{\be}[1]{\begin{equation}\label{#1}}
\newcommand{\ee}{\end{equation}} 
\newcommand{\eq}[1]{(\ref{#1})}
\renewcommand{\leq}{\leqslant}
\renewcommand{\geq}{\geqslant}
\newcommand{\script}[1]{{\mathscr #1}}
\renewcommand{\Bbb}{\mathbb}
\newcommand{\N}{{\Bbb N}}
\newcommand{\F}{{\Bbb F}}
\newcommand{\Tref}[1]{Theo\-rem\,\ref{#1}}
\newcommand{\Lref}[1]{Lem\-ma\,\ref{#1}}
\newcommand{\Cref}[1]{Co\-ro\-lla\-ry\,\ref{#1}}
\newcommand{\Fq}{{{\Bbb F}}_{\!q}}
\newcommand{\deff}{\mbox{$\stackrel{\rm def}{=}$}}
\renewcommand{\labelenumi}{\theenumi}
\newcommand{\sA}{\script{A}}
\begin{document}

\maketitle

\begin{abstract}
We consider a distributed secret sharing system that consists of a dealer, $n$ storage nodes, and $m$ users. Each user is given access to a certain subset of storage nodes, where it can download the stored data. The dealer wants to securely convey a specific secret $s_j$ to user $j$ via storage nodes, for $j=1,2,\dots,m$. More specifically, two secrecy conditions are considered in this multi-user context. The \textit{weak} secrecy condition is that each user does not get any information about the individual secrets of other users, while the \textit{perfect} secrecy condition implies that a user does not get any information about the collection of all other users' secrets.
In this system, the dealer encodes secrets into several secret shares and loads them into the storage nodes. Given a certain number of storage nodes we find the maximum number of users that can be served in such a system and construct schemes that achieve this with perfect secrecy. We further define two major properties for such distributed secret sharing systems; \emph{communication complexity} is defined as the total amount of data that users need to download in order to reconstruct their secrets; and \emph{storage overhead} is defined as the total size of data loaded by the dealer into the storage nodes normalized by the total size of secrets. Lower bounds on the minimum communication complexity and the storage overhead are characterized given any $n$ and $m$. We construct distributed secret sharing protocols, under certain conditions on the system parameters, that attain the lower bound on the communication complexity while providing perfect secrecy. Furthermore, we construct protocols, again under certain conditions, that simultaneously attain the lower bounds on the communication complexity and the storage overhead while providing weak secrecy, thereby demonstrating schemes that are optimal in terms of both parameters. It is shown how to modify the proposed protocols in order to construct schemes with \emph{balanced} storage load and communication complexity.
\end{abstract}

\begin{IEEEkeywords}
Secret sharing, distributed storage, multi-user security
\end{IEEEkeywords}

%==============================================================================%
%                                                                              %
%   1. INTRODUCTION                                                            %
%                                                                              %
%==============================================================================%
\section{Introduction} 
\label{sec:Introduction}

\IEEEPARstart{S}{ecret} sharing, introduced by Shamir \cite{shamir1979share} and Blakely \cite{blakley1979safeguarding}, is central in many cryptographic systems. It has found applications in cryptography and secure distributed computing including secure interactive computations \cite{ben1988completeness,chaum1988multiparty,beaver1991foundations,canetti2000security,cramer2000general}, secure storage \cite{garay2000secure,ateniese2006improved,kumar2012publicly}, generalized oblivious transfer \cite{shankar2008alternative,tassa2011generalized}, threshold cryptography \cite{desmedt1992shared,desmedt1993threshold,shoup2000practical}, and secure matrix multiplication \cite{aliasgari2020private,aliasgari2019distributed}. A secret-sharing scheme involves a dealer, who has a secret, a set of users, and a collection $A$ of subsets of users, which is called the access structure. A secret-sharing scheme for the access structure $A$ is a scheme for distributing the secret by the dealer among the users while guaranteeing the following. 1) Secret recovery: any subset in the access structure $A$ can recover the secret from its shares; 2) Collusion resistance: for any subset not in $A$, the aggregate data of users in that subset reveals no information about the secret. 

Most cryptographic protocols involving secret sharing assume that the central user, called the dealer, has a direct reliable and secure communication channel to all the users. In such settings, it is assumed that once the dealer computes the shares of secret, they are readily available to the users. In many scenarios, however, the dealer and users are nodes of a large network. In general, the communication between the dealer node and users can be through several relay nodes, as in a relay network or through intermediate network nodes, as in a network coding scenario. Alternatively, in a distributed storage scenario, the dealer can be thought of as a master node controlling a certain set of servers or storage nodes, while each user has access to a certain subset of servers. 

In this paper, we consider the later scenario. In particular, the system model is shown in Figure\,\ref{fig}. The dealer is considered as a central entity that controls a given set of servers, also referred to as storage nodes, and can load data to them. Alternatively, in an application concerning multiple-access wireless networks, one can think of middle nodes, sitting between the dealer and the users, as resource elements in different time or frequency, while each user has access to a certain subset of resource elements. We further consider a multi-user secret sharing scenario, in the sense that there is a designated secret, independently generated for each user, to be conveyed to that user and construct protocols to this end. The secret sharing protocols proposed in Section\,\ref{sec:three} and Section\,\ref{sec:four} are \emph{perfectly} secure,  and the protocols provided in Section\,\ref{sec:five} are \emph{weakly} secure. The system model, the weak and the perfect secrecy conditions, and our approach to construct secret sharing protocols for this system are described next. 

\subsection{System model}
\label{sec:two}
A distributed secret sharing system, shown in Figure\,\ref{fig}, consists of a dealer, $n$ storage nodes, and $m$ users. The goal of this system is to enable the dealer to securely convey a specific secret to each user via storage nodes. In this system model:
 \renewcommand{\labelenumi}{\alph{enumi})}
  \begin{enumerate}
  
\item For each user $j$, $A_j \subseteq [n]$, where $[n]\ \deff \ \{1,2,3,...,n\}$, denote the set of all storage nodes that user $j$ has access to. The set $A_j$ is referred to as the \emph{access set} for the user $j$. For each $i \in A_j$, user $j$ can read the entire data loaded into node $i$. Let 
\be{def-A}
\script{A} \,\deff\, \{ A_j : j \in [m] \}
\ee
denote the set of all access sets, which is called the \emph{access structure}.

\item Storage nodes are \emph{passive}; they do not communicate with each other. Also, the users do not communicate with each other.

\item Let $s_j \in \mathbb {F}_q$ denote the secret for user $j$. Also, $s_j$'s are uniformly distributed and mutually independent.
\item The dealer has access to all the storage nodes but it does not have direct access to the users. 
\end{enumerate}

%\begin{figure}%[h!]
%\hspace*{.5 cm}
%\begin{picture}(130,130)
%\put(0,0){\includegraphics[width=0.45\textwidth]{dsspfinal.eps}}
%\put(100,115){\small dealer}
%\put(-10,70){\small $n$ storage nodes}
%\put(-18,20){\small $m$ users}
%\end{picture}
%\caption{System model}\label{fig}
%\end{figure}

\begin{figure}
\centering
\includegraphics[width=\linewidth]{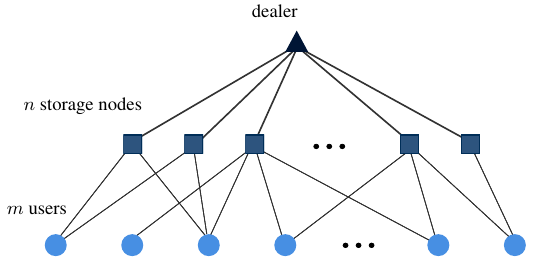}
\caption{System model}
\label{fig}
\end{figure} 

%In the other word, we want to store datas $\d_i$: in every storage node $i \in [\bold{n}] $ in a way such that every user $j \in [\bold{m}]$ could retrieve his secret $s_j$ but not any other users' secret.
 We aim at designing distributed secret sharing protocols to encode the secrets into secret shares, and distribute them in the storage nodes in such a way that: 1) Each user $j$ can successfully reconstruct its designated secret $s_j$; 2) The secrecy condition, in a \textit{weak} or a \textit{perfect} sense, is satisfied. This is defined more precisely as follows.
 \begin{definition}\label{ssd}
A \emph{distributed secret sharing protocol} (DSSP) is a bundle of $(\script{A},\cE,\bold{Z}_{n\times h},\bold{\cD})$, where
 \renewcommand{\labelenumi}{\roman{enumi})}
  \begin{enumerate}
\item $ \script {A} $ is the access structure as defined in \eq{def-A}.
\item $\cE: \mathbb{F}_q^m \rightarrow \mathbb{F}_q^h $ is an encoding function, for some $h \geq m$, which relates to the \emph{storage overhead} of the system as specified later. The input to the encoding function $\cE$ is $\bold{s}=(s_1,s_2,...,s_m)^T$ as the vector of all secrets. The output $\bold{y} = \cE(\bold {s})$ is the vector of all data to be distributed and stored in the storage nodes. Let $\bold{y} = (y_1,y_2,\dots,y_h)^T$.
\item $\bold{Z} = [z_{i,r}]_{n \times h}$, where $z_{i,r}=1$ if $y_r$ is stored in $i$-th storage node, otherwise $z_{i,r}=0$. The matrix $\bold{Z}$ is referred to as the \emph{storing matrix} and the outcome of this, specifying which data symbols are stored at each storage node, is referred to as the \textit{storage profile}. Let $\bold{y}_j$ denotes the vector of all data stored in nodes indexed by elements of the access set $A_j$. 
\item $\cD$ is a collection of $m$ decoding functions $\cD_j: \mathbb{F}_q^{|\bold{y}_j|}\rightarrow \mathbb{F}_q $, for $j\in [m]$, such that $\cD_j(\bold{y}_j)=s_j$. In other words, each user is able to successfully reconstruct its own secret. This is referred to as the \emph{correctness condition}.
\end{enumerate}
Also, in the protocol, the \emph{weak secrecy} condition is that a user $j$ does not get any information, in an information-theoretic sense, about the individual secret of another user $l$, i.e., 
\begin{equation}\label{sec-con}
\forall j,l\in [m],l \neq j : \quad H(s_l | \bold{y}_j)=H(s_l),
\end{equation}
\end{definition}
and the \emph{perfect secrecy} condition is that no information about the collection of all other secrets is revealed to the user $j$, i.e.,
\be{per-sec-con}
\forall j\in [m]: \quad H(\bs_{-j}|\by_j)=H(\bs_{-j}).
\ee
A DSSP satisfying the weak secrecy condition is also referred to as a weakly secure DSSP and a DSSP satisfying the perfect secrecy condition is also referred to as a perfectly secure DSSP. It is clear that a perfectly secure DSSP is also a weakly secure DSSP.

Note that one can always process the data using source coding techniques prior to the encoding step in the master node. Therefore, it is natural to assume that $s_j$'s are independent and uniformly distributed. The notion of \emph{weak secrecy}, as described by \eqref{sec-con}, is also used in the context of network coding \cite{bhattad2005weakly} as well as distributed storage \cite{kadhe2014weakly}.
%The protocol is also \emph{weakly secure} in the sense that a user $j$ does not get any information, in an information-theoretic sense, about the individual secret of another user $l$, i.e., 
%\begin{equation}\label{sec-con}
%\forall j,l\in [m],l \neq j : \quad H(s_l | \bold{y}_j)=H(s_l).
%\end{equation}
%\end{definition}
%Note that one can always process the data a priori by utilizing the source coding techniques before the encoding step in the master node. Therefore, we assume that  $s_j$'s are independent and uniformly distributed. The notion of \emph{weak secrecy}, as described by \eqref{sec-con}, is also used in the context of network coding \cite{bhattad2005weakly} as well as distributed storage \cite{kadhe2014weakly}. Similarly, a \emph{perfect secrecy} notion can be defined in our setup as follows. The \emph{perfect secrecy} condition requires that no information about the collection of all other secrets is revealed to the user $j$, i.e.,
%\be{per-sec-con}
 %\forall j\in [m]: \quad H(\bs_{-j}|\by_j)=H(\bs_{-j}),
%\ee
%where $\bs_{-j}$ is the vector of all secrets excluding $s_j$.
Note also that in a protocol with perfect secrecy, user $j$ does not obtain any information about the \emph{vector} of all other users' secret $\bs_{-j}$, while, in a one with weak secrecy, it might infer some information about $\bs_{-j}$, e.g., a linear combination of $s_1,\cdots,s_{j-1},s_{j+1}, \cdots, s_m$ might be revealed to user $j$. However, as pointed out in \cite{kadhe2014weakly} in the context of distributed storage, the perfect secrecy is too strong for certain practical distributed storage systems and may result in extremely large \textit{storage overhead}. We also observe a similar pattern in our setup. The proposed DSSPs in Section\,\ref{sec:three} and Section\,\ref{sec:four} satisfy the perfect secrecy condition but result in rather large \textit{storage overheads}. As a result, we also consider the relaxed notion of weak secrecy in order to design DSSPs with optimal \textit{storage overhead}. %s weak security notion has been considered in the distributed storage literature since the perfect security is too strong for some practical distributed storage systems and might result in extremely large storage overhead \cite{kadhe2014weakly}.} %The perfect security guaranties that no information is leaked about $\bs_{-j}$ to the user $j$ while the weak secrecy, in a sense, implies that user $j$ gets no \emph{meaningfull} information about all other individual secrets \cite{bhattad2005weakly}.. Nevertheless, we discuss how the results of this paper change if one requires a stronger security condition, instead of \eqref{sec-con} in Section ...  }

The notions of \emph{storage overhead} as well as \emph{communication complexity} are defined next. These are used throughout the paper in order to evaluate the efficiency of the proposed DSSPs.
 
Note that the total number of $\Fq$-symbols stored in the storage nodes is 
 \begin{align*}
  & k'\,=\,\sum_{i=1}^{n}\sum_{r=1}^{h} z_{i,r},
  \end{align*}
where $\bold{Z} = [z_{i,r}]_{n \times h}$ is specified in Definition\,\ref{ssd}. The storage overhead, also referred to as SO, of the DSSP is then defined as
\begin{equation}\label{so}
 \text{SO}\ \deff \ \frac{ k'}{m}.
 \end{equation}
Note that the correctness condition must be satisfied for $m$ mutually independent and uniformly distributed secrets. Therefore, $k' \geq m$ and, consequently, $\text{SO} \geq 1$.
This lower bound is not in general tight under the perfect secrecy condition (see \Lref{lem_so} in Appendix B). A similar result, i.e., the lower bound not being tight under perfect secrecy, is shown in a related work studying the single user case \cite{huang2016communication}. Obtaining a tight lower bound for the SO under perfect secrecy is left for future work and is not considered in this paper. However, we show that the lower bound SO $=1$ can be achieved under the weak secrecy condition for a certain set of parameters, thereby providing a protocol with weak secrecy and the optimal SO equal to $1$. 

 Let $c_j$ denote the total number of symbols that the user $j$ needs to download from the storage nodes in the access set $A_j$ in order to reconstruct $s_j$. Note that $c_j \leq |\bold{y}_j|$, since user $j$ may not need to download all its accessible data. Then the \emph{communication complexity} $C$ is defined as
 \begin{equation}\label{c}
 C\ \deff\ \sum_{j=1}^{m} c_j.
 \end{equation}
  
\subsection{Main Results}

We first consider the problem of finding the maximum number of users that can be served in a DSSP given a certain number of storage nodes. This maximum number is derived using a necessary and sufficient condition on access sets in a DSSP that relates to Sperner families in combinatorics. We further present a method for constructing perfectly secure DSSPs that serve the maximum number of users.

For a given number of users $m$ and number of storage nodes $n$, a DSSP with minimum communication complexity $C$ defined in \eq{c} is called a \emph{communication-optimal} DSSP.
We solve a discrete optimization problem to provide a lower bound on the minimum communication complexity. We further construct perfectly secure DSSPs that are communication-optimal when $m$ is a binomial coefficient of $n$, i.e., they achieve the minimum possible communication complexity while providing perfect secrecy. 

We further construct weakly secure DSSPs with nearly optimal and optimal storage overhead for any given parameters $m$ and $n$. In the proposed scheme with optimal storage overhead, no external randomness is required and the total size of data to be stored in storage nodes is equal to the total size of secrets. Consequently, this provides the optimal storage overhead. Combining this with communication-optimal DSSPs yields optimality of both the communication complexity and the storage overhead with weak secrecy. 

Finally, it is shown how to modify the optimal constructions of DSSPs in order to have both balanced storage load, i.e., the number of data symbols stored in each node and balanced communication complexity, i.e., the number of symbols downloaded from each node.  

%Furthermore, this method can be applied to the communication-optimal DSSP constructed in Section \ref{sec:four} thereby providing DSSPs that are optimal in terms of both the communication complexity and the storage overhead.  
 
 %In \ref{sec:five}  we focus on the minimum storage overhead DSSP and define an optimal DSSP as a DSSP which is communication and storage overhead efficient. 
 %We propose a DSSP which is optimal in a special case where $m$ is a binomial coefficient of power $n$. These results empower us to construct a DSSP for a given $n$ which serves the maximum
 %number of possible users, is communication efficient and also has minimum possible $\text{SO}$. Provided optimal DSSP has computation complexity at least $O(m^2)$, thus by applying some refinements on this protocol in \ref{sec:six} , we introduce a nearly optimal DSSP with $O(\frac{m \log^2{m}}{\log^2{n}}) $ complexity with the cost of neglectable  inefficiency in $\text{SO}$.
 
\subsection{Shamir's Scheme and Related Works}
 The $(k,t)$ secret sharing scheme proposed by Shamir in \cite{shamir1979share} is described next. Given a secret $s \in \mathbb{F}_q$ the output of the scheme consists of $k$ \emph{secret shares} $d_1,d_2,\dots,d_k \in \mathbb{F}_q$ in such a way that:
 
 \begin{enumerate}
 \item The secret $s$ can be reconstructed given any $t$ or more of the secret shares;
 %\be{shamir recover}

 \item The knowledge of any $t-1$ or fewer secret shares does not reveal any information about $s$, in the information-theoretic sense.
 \end{enumerate}
 To this end, a $(t-1)$-degree polynomial $P(X)$ is constructed as
\be{shamir polynomial}
P(x)=s+\sum_{i=1}^{t-1} p_i x^i,
\ee
where $p_i$'s are i.i.d and are selected uniformly at random from $\mathbb{F}_q$. Let $\gamma_1,\gamma_2,\dots,\gamma_k$ denote $k$ distinct non-zero elements from $\Fq$. The secret shares are then constructed by evaluating $P(x)$ at $\gamma_i$'s, i.e., 
$$
\forall i \in [k] \quad d_i=P(\gamma_i).
$$
We refer to the encoder $\cE:\Fq \rightarrow \Fq^k$ that takes $s$ as the input and outputs $(d_1,d_2,\dots,d_k)$ as described above as a $(k,t)$ Shamir's secret encoder. 

Given any $t$ secret shares $P(x)$ is interpolated and is uniquely determined. This is because the degree of $P(x)$ is at most $t-1$ and each secret share specifies one interpolation point leading to $t$ distinct interpolation points. Then $s=P(0)$ is reconstructed. We refer to this process as Shamir's secret decoder. The process of generating secret shares by evaluating $P(X)$ and then recovering the secret by interpolating it is essentially same as Reed-Solomon encoding and decoding process. The close connection between the two was noted in \cite{mceliece1981sharing}.

\iffalse

 The $(m,t)$ threshold secret sharing scheme provided by Shamir in \cite{shamir1979share}, consists of a dealer and $m$ users. A secret symbol $s \in \mathbb{F}_q$ is given to the dealer. The goal of this scheme is to enable the dealer to generate a symbol $d_i \in \mathbb{F}_q$ for each user $i \in[k]$ such that
 
 \begin{enumerate}
 \item Each group of $t$ or more users can determine $s$.
 %\be{shamir recover}

 \item The secret is completely undetermined to every group of $t-1$ or fewer users in the information theoric sense.
 \end{enumerate}
 
 To this end, dealer chooses $a_1,a_2,\dots,a_{(t-1)}$ from the uniform distribution over $\mathbb{F}_q $ and then constructs a $(t-1)$-degree polynomial $P(X)$ as follows:
 
 \be{shamir polynomial2}
 P(x)=s+\sum_{i=1}^{t-1} a_i x^i 
 \ee
 Then, it generates shares by evaluating \eqref{shamir polynomial2} on 
 $k$ distinct evaluation points, namely $\alpha_1,\alpha_2,\dots,\alpha_k \in \mathbb{F}_q$, which are known to every user,i.e.
 
 $$
\forall i \in [k] \quad d_i=P(\alpha_i).
$$
 
 \fi
 
There are several previous works that have considered Shamir's scheme in the context of networks \cite{shah2015distributed} and distributed storage systems \cite{bitar2017staircase,huang2016communication}. In these works, there is only one secret, as in the original Shamir's scheme, to be distributed to users either as nodes of a network \cite{shah2015distributed} or as users of a distributed storage system \cite{bitar2017staircase,huang2016communication}, in a collusion-resistant way. However, we consider a multi-user secret sharing scenario, where there is one designated secret for each user, and the secret shares are distributed over a set of storage nodes. Also, in our constructed schemes we guarantee that each user does not get any information about other users' secrets, either individually in a weak sense or collectively in a perfect sense, thereby providing proper measures of security in a multi-user setting. 

%=======================================================================%
%                                                                       %
%     3.Maximum number of users                     %              
%                                                                       %
%=======================================================================%
\section{DSSP with Maximum Number of Users}
\label{sec:three}

In this section, we consider the following problem: What is the maximum possible number of users that can be served in a DSSP given a certain number of storage nodes? A necessary condition on access sets in a DSSP with weak secrecy is shown which relates to Sperner families in combinatorics. Furthermore, it is shown that this condition is sufficient to guarantee perfect secrecy in a DSSP. In other words, the provided condition is necessary and sufficient for the existence of both weakly secure and perfectly secure DSSPs. In fact, the relation to Sperner families is invoked to present a method for constructing DSSPs that serve maximum number of users.

\begin{lemma}\label{spe-con}
For a weakly secure DSSP with access structure $\script{A}$ defined in \eq{def-A}:
\begin{equation} \label{spe-con-1}
 A_j\nsubseteq A_l,
 \end{equation}
 for all $j,l \in [m]$ with $j \neq l$.
  \end{lemma}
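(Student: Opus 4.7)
The plan is to prove the contrapositive by contradiction: assume that some $A_j \subseteq A_l$ with $j \neq l$ and derive a violation of either the correctness condition or the security condition from Definition~\ref{ssd}. This is the classical antichain argument, and the inclusion case subsumes equality (since $A_j \subseteq A_l$ is what $A_j \not\subset A_l$ rules out when we interpret $\subset$ as $\subseteq$, as the lemma implicitly does).

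Concretely, I would argue as follows. Fix distinct users $j,l \in [m]$ and suppose for contradiction that $A_j \subseteq A_l$. By part (iii) of Definition~\ref{ssd}, $\mathbf{y}_j$ consists of the entries of $\mathbf{y}$ stored in nodes indexed by $A_j$, and $\mathbf{y}_l$ consists of the entries stored in nodes indexed by $A_l$. Since $A_j \subseteq A_l$, every coordinate of $\mathbf{y}_j$ appears as a coordinate of $\mathbf{y}_l$, so $\mathbf{y}_j$ is a deterministic function of $\mathbf{y}_l$. By the correctness condition in part (iv), $s_j = \mathcal{D}_j(\mathbf{y}_j)$, so $s_j$ is itself a deterministic function of $\mathbf{y}_l$. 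Hence $H(s_j \mid \mathbf{y}_l) = 0$.

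On the other hand, the security requirement~\eqref{sec-con} applied to the pair $(j,l)$ gives $H(s_j \mid \mathbf{y}_l) = H(s_j)$. Since $s_j$ is uniformly distributed over $\mathbb{F}_q$ (with $q \geq 2$ so that secret sharing is meaningful), $H(s_j) = \log q > 0$. Combining this with $H(s_j \mid \mathbf{y}_l) = 0$ yields the desired contradiction, so no such pair $(j,l)$ can exist.

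I do not anticipate any real technical obstacle here; the argument is purely a bookkeeping consequence of the definitions, and the only subtlety worth flagging explicitly is to note that the chain $\mathbf{y}_j \to \mathbf{y}_l$ makes $s_j$ measurable with respect to $\mathbf{y}_l$, which immediately collides with information-theoretic security. Once that observation is in place, the entropy inequalities close the argument in one line.
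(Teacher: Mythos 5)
Your proposal is correct and follows essentially the same route as the paper: assume $A_j \subseteq A_l$, observe that user $l$ then has access to all of user $j$'s data, use the correctness condition to conclude $s_j$ is determined by $\mathbf{y}_l$, and contradict the security condition~\eqref{sec-con}. Your version merely makes the paper's informal ``user $l$ can also retrieve $s_j$'' step explicit via the entropy chain $H(s_j \mid \mathbf{y}_l) = 0 < H(s_j)$, which is a harmless (and arguably welcome) formalization of the same argument.
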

 \begin{proof}: Assume to the contrary that $A_j \subseteq A_l$ for some $j \neq l$. Therefore, the entire accessible data by user $j$ can also be accessed by user $l$. Since user $j$ can retrieve $s_j$ by the correctness condition, user $l$ can
 also retrieve $s_j$. This means the weak secrecy condition is violated and the protocol is not a DSSP as defined in Definition 1, which is a contradiction.
\end{proof}

Collections of subsets satisfying the condition of \Lref{spe-con} are well-studied combinatorial objects. Such a collection $\script{A}$ is called a \emph{Sperner family}\cite{sperner1928satz}. For any Sperner family $\script{A}$ we have \cite{sperner1928satz}
\begin{equation}\label{spe-fea}
|\script{A}| \leq { {n}\choose{\floor{n/2}}},
\end{equation}
and more generally a necessary condition for existence of a Sperner family with $a_k$ subsets of size $k$, for $k \in [n]$, is that \cite{lubell1966short} 
\begin{equation}\label{lym}
\sum_{k=0}^{k=n}  \frac{a_k}{{{n}\choose{k}}} \leq 1.
\end{equation}
Since the number of users is $m = |\script{A}|$ , \eq{spe-fea} implies an upper bound on $m$, i.e.,
\begin{equation}\label{uperbound}
m \leq { {n}\choose{\floor{n/2}}}.
\end{equation}

Next, we use Shamir's secret sharing scheme to construct a DSSP with perfect secrecy when the access structure $\sA$ is a Sperner family. Let $t_j = |A_j|$. In this construction, a $(t_j,t_j)$ Shamir's secret sharing scheme is used independently for each user $j$, both in the encoding of $s_j$ by the dealer and decoding it by user $j$. Such a DSSP is denoted by S-DSSP\{$\script{A},n$\}. In other words, the condition in \Lref{spe-con} is a sufficient condition for existence of a perfectly secure DSSP. More specifically, S-DSSP\{$\script{A},n$\} is described as follows:

% Next, we use Shamir's secret sharing scheme to construct a DSSP when the access structure $\sA$ is a Sperner family. Such DSSP is denoted by S-DSSP\{$\script{A},n$\}. In other words, the condition in \Lref{spe-con} is a sufficient condition for existence of a DSSP. S-DSSP\{$\script{A},n$\} is specified as follows:
\renewcommand{\labelenumi}{\roman{enumi})}
\begin{enumerate}
\item $\script{A}$ is a Sperner family consisting of subsets of $[n]$.

 %&\quad \bold{a}_j\ \deff \(a_{j1},a_{j2},...,a_{jt})\ |\  \forall l\in[t]: \ a_{jl} \in A_j,\ a_{jl} \leq a_{j(l+1)}
%\\
\item $\cE(\bold{s}) = (\cE_1(s_1),\cE_2(s_2),\dots,\cE_m(s_m))$, where $\cE_j$ is a $(t_j,t_j)$ Shamir's secret encoder and $t_j = |A_j|$.  

\item $\forall j \in [m] :\ \bold{Z}[a_{i,j},\tau_{j-1}+i] =1$ for $i \in [t_j]$, where $\tau_j =t_1 +...+t_j$ and $\tau_0=0$; and $A_j = \{a_{1,j},a_{2,j},\dots,a_{t_j,j}\}$. All other entries of $\bold{Z}$ are zero. 

\item $\cD_j$ is the $(t_j,t_j)$ Shamir's secret decoder, for $j \in [m]$.  
\end{enumerate}

%\emph{\textbf{claim 2}}: 
\begin{lemma}\label{maxuser} S-DSSP\{$\script{A},n$\} is a perfectly secure DSSP satisfying all properties in Definition 1.
\end{lemma}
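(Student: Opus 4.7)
The plan is to verify, one by one, the four structural items of Definition \ref{ssd} together with the security condition (\ref{sec-con}) for the scheme described by (i)--(iv) above.

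First I would dispatch the routine items. Item (i) holds by hypothesis, since $\script{A}$ is assumed to be a Sperner family of subsets of $[n]$. For item (ii), I would check that the map $\cE$, being a concatenation of the $m$ independent Shamir encoders, is well-defined as $\cE:\mathbb{F}_q^m \to \mathbb{F}_q^h$ with $h = \sum_{j=1}^m t_j \geq m$. Item (iii) is explicit in the construction: the storing matrix $\bold{Z}$ places the $i$-th share of $\cE_j(s_j)$ onto node $a_{i,j}\in A_j$, with no overlap across different $j$'s. For item (iv), correctness follows at once because $A_j$ is exactly the set of $t_j$ nodes holding the $t_j$ shares produced by $\cE_j$; hence $\bold{y}_j$ contains all $t_j$ shares, and Shamir's $(t_j,t_j)$ decoder $\cD_j$ reconstructs $s_j$ by the interpolation argument recalled in Section \ref{sec:Introduction}.

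The main obstacle is verifying the security condition $H(s_l \mid \bold{y}_j) = H(s_l)$ for every pair $j \neq l$. My approach is to exploit the Sperner property of $\script{A}$ (the content of \Lref{spe-con}): applied to the pair $(l,j)$ it gives $A_l \not\subset A_j$, hence $|A_l \cap A_j| \leq t_l - 1$. By the way $\bold{Z}$ is defined, the only coordinates of $\bold{y}_j$ that depend on $s_l$ are the shares of $\cE_l(s_l)$ stored on nodes in $A_l \cap A_j$, so user $j$ observes at most $t_l - 1$ shares of $s_l$. By the $(t_l,t_l)$-threshold guarantee of Shamir's scheme, any collection of at most $t_l - 1$ shares is jointly independent of $s_l$ and uniformly distributed over its range.

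The subtle point I expect to be the trickiest step is to upgrade this per-secret statement to conditioning on all of $\bold{y}_j$, which also contains shares of the other secrets $s_{l'}$ with $l' \neq l$. Here I would invoke two built-in properties of the construction: the secrets $s_1,\dots,s_m$ are mutually independent and uniform, and the random masking coefficients used by the distinct encoders $\cE_1,\dots,\cE_m$ are drawn independently. Consequently the blocks of $\bold{y}_j$ corresponding to different secrets factorize in distribution, so conditioning on the blocks with $l' \neq l$ does not restore any information about $s_l$. Combining this factorization with the Shamir threshold argument yields (\ref{sec-con}), completing the verification that S-DSSP$\{\script{A},n\}$ satisfies every clause of Definition \ref{ssd}.
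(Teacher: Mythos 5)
Your proposal is correct and follows essentially the same route as the paper's proof: correctness because each user's access set holds all $t_j$ of its Shamir shares, and security because the Sperner property guarantees user $j$ sees at most $t_l-1$ shares of $s_l$, which the $(t_l,t_l)$ threshold guarantee (together with the independence of the secrets and of the per-user random seeds) renders uninformative. Your write-up is merely more explicit than the paper's about the factorization step that lets one condition on all of $\bold{y}_j$ rather than only on the shares of $s_l$.
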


\begin{proof} S-DSSP\{$\script{A},n$\} assigns a $t_j$-subset of $[n]$ to user $j$. It encodes $s_j$ into $t_j$ secret shares using Shamir's scheme with the threshold $t_j$ and random seeds generated independently from other users. It then stores one share on each node in $A_j$ as specified by $\bold{Z}$. Clearly, each user can reconstruct its secret by invoking Shamir's secret decoder. Also, Shamir's scheme guarantees the perfect secrecy condition, specified in \eq{per-sec-con}, since $\script{A}$ is a Sperner family and consequently, no user other than user $j$ has access to all of its $t_j$ shares. Therefore, S-DSSP\{$\script{A},n$\} is a DSSP satisfying all properties in Definition 1.
\end{proof}

We can pick a Sperner family $\sA$ with the maximum size $|\sA| = {n\choose{\floor{n/2}}}$ and then construct a S-DSSP\{$\script{A},n$\}. This satisfies all properties of a perfectly secure DSSP by \Lref{maxuser} and serves the maximum possible number of users given a certain number of storage nodes $n$. 

\section{Communication-Optimal DSSP}
\label{sec:four}
 
In this section, we derive a lower bound on the communication complexity of DSSPs, i.e., the amount of data that users need to download in order to reconstruct the secrets. We then show \emph{communication-optimal} DSSPs that attain this lower bound under certain conditions. 

Let $m$ and $n$ denote the number of users and storage nodes, respectively. It is assumed that
$$
m \leq {n\choose \floor{n/2}}, 
$$
as in \eq{uperbound}. Otherwise, by \Lref{spe-con} and \eq{spe-fea} a DSSP does not exist.

A DSSP with minimum communication complexity $C$, defined in \eq{c}, is called a \emph{communication-optimal} DSSP. A certain class of DSSPs, called \emph{tight} DSSPs, defined below, is useful to derive lower bounds on the communication complexity and to construct communication-optimal DSSPs. This will be shown in \Lref{allone}.

  %\begin{definition}\label{ce-dss}
 %Communication efficient DSSP is a DSSP which minimizes $C$ for a given $m$.
%\end{definition}

\begin{definition}\label{tdss}
 A DSSP is said to be a \emph{tight} DSSP (T-DSSP) if every user downloads exactly one $\Fq$-symbol from each node in its access set.  
\end{definition}

Note that, for example, every S-DSSP, defined in Section\,\ref{sec:three}, is a T-DSSP. 

% The following lemma shows that it is enough to limit our attention to T-DSSPs in order to construct a communication-optimal DSSP.

\begin{lemma}\label{allone}
For any DSSP with communication complexity $C$, there exists a perfectly secure T-DSSP with the same number of users and storage nodes, and communication complexity $\tilde{C}$ such that 
\begin{equation*}
\tilde{C} \leq C.
\end{equation*}
\end{lemma}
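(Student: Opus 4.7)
The plan is to start from an arbitrary DSSP $(\script{A}, \cE, \bold{Z}, \cD)$ with communication complexity $C = \sum_{j=1}^{m} c_j$, to identify for each user $j$ the subset $B_j \subseteq A_j$ consisting of exactly those storage nodes from which user $j$ actually downloads at least one $\Fq$-symbol, and then to instantiate a fresh T-DSSP that uses $\script{B} \deff \{B_j : j \in [m]\}$ as its access structure. Since user $j$ must download at least one symbol from every node in $B_j$ in order to accumulate $c_j$ symbols in total, we immediately obtain $|B_j| \leq c_j$, and hence $\sum_{j=1}^{m} |B_j| \leq C$.

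The key step is to verify that $\script{B}$ is a Sperner family so that \Lref{maxuser} becomes applicable. I would in fact prove the slightly stronger statement that $B_j \not\subseteq A_l$ for all $l \neq j$; the Sperner property then follows since $B_l \subseteq A_l$. Suppose toward contradiction that $B_j \subseteq A_l$ for some $l \neq j$. Since user $j$ reconstructs $s_j$ using only the $c_j$ symbols it downloads from the nodes in $B_j$, there exists a deterministic map $\cD_j'$ taking the contents of the nodes in $B_j$ to $s_j$ that succeeds with probability one. But user $l$ has full read access to every node of $A_l \supseteq B_j$, so it can evaluate $\cD_j'$ on its own readings and deterministically recover $s_j$. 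This forces $H(s_j \mid \bold{y}_l) = 0$, contradicting the security condition \eq{sec-con}, which requires $H(s_j \mid \bold{y}_l) = H(s_j) > 0$ for the uniform secret $s_j \in \Fq$.

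Once the Sperner property is in hand, I would simply invoke \Lref{maxuser} to form S-DSSP$\{\script{B}, n\}$, leaving any storage node that does not lie in any $B_j$ empty. By Definition \ref{tdss} this construction is automatically a T-DSSP, it involves the same $m$ users and $n$ storage nodes, and its communication complexity equals $\tilde{C} = \sum_{j=1}^{m} |B_j| \leq C$, as required. The main obstacle is the Sperner step, and it hinges on replacing the original decoder $\cD_j$ by the reduced decoder $\cD_j'$ that depends only on the symbols user $j$ truly downloads; this reduction is precisely what allows user $l$ to ``replay'' user $j$'s decoding from its own (possibly larger) access set, and so turn a containment $B_j \subseteq A_l$ into a breach of the original scheme's security.
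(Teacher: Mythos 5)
Your proposal is correct and follows essentially the same route as the paper: both restrict each user's access set to the nodes it actually needs (your $B_j$ versus the paper's minimum-size reconstructing subset $\tilde{A}_j$), use the security condition to show the restricted family is Sperner, and then build an S-DSSP on that family via \Lref{maxuser}. Your write-up is in fact slightly more explicit than the paper's on the key security step, spelling out the ``replay the reduced decoder'' argument that the paper compresses into ``otherwise user $j$ would be able to reconstruct $s_l$.''
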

\begin{proof} For each user $l$, let $\tilde{A_l} \subset A_l$ denote the set with the minimum size such that user $l$ can reconstruct its secret $s_l$ by downloading data from $\tilde{A_l}$. Note that user $l$ has to download at least one symbol from each node in $\tilde{A_l}$. Therefore,
\begin{equation}\label{cbigsA}
\sum_{j=1}^{m} | \tilde{A}_j| \leq C.
\end{equation}
The secrecy condition implies that
\begin{equation}\label{notin}
\forall j,l \in [m], j\neq l: \quad \tilde{A_l} \nsubseteq A_j. 
\end{equation}
Otherwise, user $j$ would be able to reconstruct $s_l$. Let $\tilde{\script{A}}\ \deff \ \{ \tilde{A_j}: \forall j\in [m]\}$, which is a Sperner family. %Since $\tilde{A}_j \subset A_j$ for all $j$, from \eq{notin} we conclude:

%\begin{equation}\label{tildespe}
%\forall j,l \in [m], j\neq l: \quad \tilde{A_l} \nsubset \tilde{A}_j 
%\end{equation}
%So $ \tilde{\script{A}}$ is a Sperner family. 
We then construct a S-DSSP associated with the access structure $\tilde{\script{A}}$ and having communication complexity $\tilde{C}= \sum_{j=1}^{m} | \tilde{A}_j|$. This follows from the fact that in a S-DSSP each user downloads exactly one data symbol from the nodes in its access set. This together with \eq{cbigsA} and recalling that a S-DSSP is also a T-DSSP complete the proof.
\end{proof}

%\begin{corollary}
%There exists a communication efficient DSSP which is also T-DSSP. So it is sufficient to search over all T-DSSPs to find a communication efficient DSSP. In other words, minimum possible $C$ over 
%T-DSSPs is also the minimum possible $C$ over all DSSPs.
%\end{corollary}

Note that the communication complexity of a T-DSSP depends only on its associated access structure. Let $a_k$ denote the number of subsets of size $k$ in the access structure of the T-DSSP. Then its communication complexity is given by
\be{tdss-same-c}
C=\sum_{k=1}^{n} k a_k.
\ee
Therefore, by \Lref{allone}, one can consider minimizing $\sum_{k=1}^{n} k a_k$ to find a communication-optimal DSSP provided that a Sperner family with such $a_k$'s exists. To this end, we consider the following discrete optimization problem:
\begin{align}
& \underset{}{\text{min}}
& & \sum_{k=1}^{n} k a_k  \label{min}\\
& \text{s.t.} & &  \forall\  k \in \{1,...,n\}: \ a_k \in \N \cup \{0\} \label{w} \\
& & &  \sum_{k=1}^{n} a_k=m, \label{sum} \\
& & & \sum_{k=1}^{n}  \frac{a_k}{{{n}\choose{k}}} \leq 1. \label{sperner} 
\end{align}
%%%%%%%%%%%%%%%%%%%%%%%%%%%%%%%%%%%%%%%%%%%%%%%%%%%%%%%%%%%%%%%%%%%%%%%%%%%%%%%
\iffalse
 \begin{align}
&\min  \ \sum_{k=1}^{\floor{n/2}} k a_k  \label{min}
\\
  \text{s.t. }&\ \forall\  k \in \{1,...,\floor{n/2}\}: \ a_k \in \N \cup \{0\},\label{w}
\\
 & \sum_{k=1}^{\floor{n/2}} a_k=m, \label{sum}
\\
 &\sum_{k=1}^{\floor{n/2}}  \frac{a_k}{{{n}\choose{k}}} \leq 1. \label{sperner}
\end{align}
\fi
%%%%%%%%%%%%%%%%%%%%%%%%%%%%%%%%%%%%%%%%%%%%%%%%%%%%%%%%%%%%%%%%%%%%%%%%%%%%%%
Constraint (\ref{w}) is set because $a_k$'s must be non-negative. Constraint (\ref{sum}) is set because the sum of $a_k$'s is equal to the total number of users $m$. Also, by \eq{lym}, \eq{sperner} is a necessary condition for existence of a Sperner family with $a_k$ subsets of size $k$. If such Sperner family exists for the solution of this optimization problem, then we will have a communication-optimal DSSP with perfect secrecy. Otherwise, the minimum objective function is a lower bound for the minimum communication complexity. Note that due to the reciprocity of the binomial coefficients, we have $a_k=0$ for all $\floor{\frac{n}{2}}<k$ in the solution of this optimization problem.

The idea is to first solve a continuous version of this optimization problem, stated below, and then extract the solution for the discrete version from the solution of the continuous version. Consider the following problem:
 \begin{align}
& \underset{}{\text{min}}
& & \sum_{k=1}^{\floor{n/2}} k \alpha_k  \label{minofc}\\
& \text{s.t.} & &  \forall\  k \in \{1,...,\floor{n/2}\} : \ \  \alpha_k \geq 0  ,\label{wc}  \\
& & &  \sum_{k=1}^{\floor{n/2}} \alpha_k=m, \label{sumc} \\
& & & \sum_{k=1}^{\floor{n/2}}  \frac{\alpha_k}{{{n}\choose{k}}} \leq 1. \label{spernerc} 
\end{align}
%%%%%%%%%%%%%%%%%%%%%%%%%%%%%%%%%%%%%%%%%%%%%%%%%%%%%%%%%%%%%%%%%%%%%%%%%%%%%%%%%%%%%%%%%%%%%%%%%%
\iffalse
\begin{align}
\text{min :}  &\ \psi =\sum_{k=1}^{\floor{n/2}} k \alpha_k  %\label{minc}
\\
  \text{sub to :}&\ \forall\  k \in \{1,...,\floor{n/2}\} : \ \  \alpha_k \geq 0  ,%\label{wc}
\\
 & \sum_{k=1}^{\floor{n/2}} \alpha_k=m,%\label{sumc}
\\
 &\sum_{k=1}^{\floor{n/2}}  \frac{\alpha_k}{{{n}\choose{k}}} \leq 1.%\label{spernerc}
\end{align}
\fi
%%%%%%%%%%%%%%%%%%%%%%%%%%%%%%%%%%%%%%%%%%%%%%%%%%%%%%%%%%%%%%%%%%%%%%%%%%%%%%%%%%%%%%%%%%%%%%%%%%
where $\alpha_k \in \mathbb{R}$. This optimization problem can be solved by satisfying Karush--Kuhn--Tucker (KKT) condition \cite{boyd2004convex}. Let $\psi^*$ denote the minimum of the objective function in the above continuous optimization problem. Suppose that $\psi^*$ is achieved by the choice of $\alpha_k^*$, for $k=1,2,\dots,\floor{n/2}$. It is shown in Appendix A that at most two of $\alpha_k^*$'s are non-zero. Furthermore, it is shown that if two non-zero $\alpha_k^*$'s exist, then their indices are consecutive. In particular, the solution is described as follows. Let $i$ denote the largest integer such that 
\be{i-def}
 {n\choose i} \leq m.
\ee
Then
\begin{align}\label{alphai2}
&\alpha_i^* =\frac{{n\choose i+1} -m}{{n\choose i+1}-{n\choose i}} {n\choose i},
\\
&\alpha_{i+1}^* =\frac{ m-{n\choose i}}{{n\choose i+1}-{n\choose i}} {n\choose i+1}\label{alphai12},
\end{align}
and $\alpha_k^* = 0$, for $k\neq i, i+1$. Also, by \eq{min_obj} the minimum possible objective function is
\begin{equation}\label{psi}
\psi^* = i \alpha_i ^*+ (i+1) \alpha_{i+1}^*.
\end{equation}

%We are almost ready to find the exact solution of continuous optimization problem and then for the discrete one.

Let $C^*$ denote the minimum of the objective function in the discrete optimization problem. It is clear that
\begin{equation}\label{clowerbound}
\ceil{\psi^*} \leq C^*.
\end{equation}
The following lemma shows that $C^* = \ceil{\psi^*}$. 

\begin{lemma}\label{minc}
We have
$$
C^* = \ceil{\psi^*}.
$$
Furthermore, this minimum objective function is achieved by choosing $a_i=\floor{\alpha_i^*}$, $a_{i+1}=\ceil{\alpha_{i+1}^*}$, and $a_k = 0$, for $k \neq i,i+1$.
\end{lemma}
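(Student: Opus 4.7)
The plan is to exhibit the proposed integer point $(a_i,a_{i+1})=(\lfloor\alpha_i^*\rfloor,\lceil\alpha_{i+1}^*\rceil)$ with all other $a_k=0$, verify it is feasible for (\ref{w})--(\ref{sperner}), and then check that its objective value coincides with $\lceil\psi^*\rceil$. Combined with (\ref{clowerbound}) and the fact that the discrete objective $\sum_k k a_k$ is necessarily an integer, this forces $C^*=\lceil\psi^*\rceil$ and simultaneously exhibits an optimizer.

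The key observation for feasibility is that $\alpha_i^*+\alpha_{i+1}^*=m\in\mathbb{Z}$, so either both $\alpha_i^*$ and $\alpha_{i+1}^*$ are integers, or else their fractional parts satisfy $\{\alpha_i^*\}+\{\alpha_{i+1}^*\}=1$. In either case,
\[
\lfloor\alpha_i^*\rfloor+\lceil\alpha_{i+1}^*\rceil=m,
\]
which gives (\ref{sum}); non-negativity (\ref{w}) is immediate. For the LYM-type constraint (\ref{sperner}), the KKT analysis in the Appendix yields the tight identity $\alpha_i^*/\binom{n}{i}+\alpha_{i+1}^*/\binom{n}{i+1}=1$. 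Rounding down the first coordinate and up the second changes the left-hand side by $-\{\alpha_i^*\}/\binom{n}{i}+\{\alpha_i^*\}/\binom{n}{i+1}$ (using $1-\{\alpha_{i+1}^*\}=\{\alpha_i^*\}$), and since $i\le \lfloor n/2\rfloor$ we have $\binom{n}{i}\le\binom{n}{i+1}$, so this change is non-positive. The sole edge case $i=\lfloor n/2\rfloor$ with $n$ even forces $m=\binom{n}{\lfloor n/2\rfloor}$ by (\ref{uperbound}), whence $\alpha_{i+1}^*=0$ and no rounding occurs.

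For the objective, I would rewrite
\[
\psi^* \;=\; i(\alpha_i^*+\alpha_{i+1}^*)+\alpha_{i+1}^* \;=\; im+\alpha_{i+1}^*,
\]
so that $\{\psi^*\}=\{\alpha_{i+1}^*\}$ and $\lceil\psi^*\rceil=\psi^*+\{\alpha_i^*\}$ in both the integer and non-integer cases. A direct expansion using $\lfloor\alpha_i^*\rfloor=\alpha_i^*-\{\alpha_i^*\}$ and $\lceil\alpha_{i+1}^*\rceil=\alpha_{i+1}^*+(1-\{\alpha_{i+1}^*\})$ (trivial if $\alpha_{i+1}^*\in\mathbb{Z}$) together with $\{\alpha_{i+1}^*\}=1-\{\alpha_i^*\}$ gives
\[
i\lfloor\alpha_i^*\rfloor+(i+1)\lceil\alpha_{i+1}^*\rceil \;=\; \psi^*+\{\alpha_i^*\} \;=\; \lceil\psi^*\rceil,
\]
completing the proof.

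The only real subtlety is aligning the fractional parts of $\alpha_i^*$ and $\alpha_{i+1}^*$ so that both the LYM inequality survives rounding and the discrete objective lands exactly at $\lceil\psi^*\rceil$. Once one notices that the linear constraint $\alpha_i^*+\alpha_{i+1}^*=m$ couples these two fractional parts via $\{\alpha_i^*\}+\{\alpha_{i+1}^*\}\in\{0,1\}$, both verifications collapse to a single short computation.
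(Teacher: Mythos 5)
Your proof is correct and follows essentially the same route as the paper's: exhibit the rounded point, verify feasibility by perturbing the tight LYM identity by $\epsilon=\{\alpha_i^*\}$ in the direction $\frac{1}{\binom{n}{i+1}}-\frac{1}{\binom{n}{i}}\leq 0$, and compute the objective as $\psi^*+\{\alpha_i^*\}=\lceil\psi^*\rceil$. You are in fact a bit more careful than the paper in handling the boundary case $i=\lfloor n/2\rfloor$ (where the sign argument would otherwise fail), but the substance of the argument is the same.
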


\begin{proof}
Let $a_i=\floor{\alpha_i^*}$, $a_{i+1}=\ceil{\alpha_{i+1}^*}$, and $a_k = 0$, for $k \neq i,i+1$. Let $\epsilon = \alpha_i^*-\floor{\alpha_i^*}$. Note that $\alpha_i^*+ \alpha_{i+1}^*=m$ and $m$ is an integer. Therefore, $\alpha_{i+1}^*=\ceil{\alpha_{i+1}^*}-\epsilon$. In other words, the fraction part of $\alpha_{i+1}^*$ is $1-\epsilon$. Then one can write
\begin{equation}
\label{clowerbound_ai}
a_i=\alpha_i^*-\epsilon, \quad a_{i+1}=\alpha_{i+1}^*+\epsilon
\end{equation}
where $0\leq \epsilon<1$. First, we show feasibility of this solution by checking the constraints of the optimization problem. It is easy to see that (\ref{w}) and (\ref{sum}) are satisfied. Also,
\begin{align*}
\frac{a_i}{{n\choose i}}+\frac{a_{i+1}}{{n\choose i+1}} = \frac{\alpha_i^*}{{n\choose i}}+\frac{\alpha_{i+1}^*}{{n\choose i+1}}+\epsilon(\frac{1}{{n\choose i+1}}-\frac{1}{{n \choose i}}) \leq 1,
\end{align*}
where the equality holds by \eq{clowerbound_ai} and the inequality holds by (\ref{spernerc}) and noting that $\frac{1}{{n\choose i+1}}-\frac{1}{{n \choose i}}$ is negative . Therefore, (\ref{sperner}) is also satisfied which shows that the solution is feasible. At last, it is shown that this solution achieves equality in \eq{clowerbound}. For this solution, we have 
\begin{align}\label{Cstar}
C^* = ia_i+(i+1)a_{i+1} = i \alpha_i^* + (i+1) \alpha_{i+1}^* + \epsilon = \psi^* + \epsilon.
\end{align}
and, therefore, $ C^*=\ceil{\psi^*}$. 
\end{proof}

The Following theorem is the summary of this section's results.
\begin{theorem}\label{sec:four_main}
For a given number of users $m$ and storage nodes $n$, any T-DSSP with the following access structure $\script{A}$ is a communication-optimal DSSP: $\script{A}$ is a Sperner family that contains $\floor{\alpha_i^*}$ of $i$-subsets of $[n]$ and $\ceil{\alpha_{i+1}^*}$ of $(i+1)$-subsets of $[n]$, where $i$ is the maximum integer that satisfies \eq{i-def}, and $\alpha_{i}^*$ and $\alpha_{i+1}^*$ are as calculated in (\ref{alphai2}) and (\ref{alphai12}).
\end{theorem}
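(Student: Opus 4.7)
The plan is to match a lower bound coming from the discrete optimization \eqref{min}--\eqref{sperner} with an upper bound coming from an explicit construction. First I would note that by \Lref{allone} the minimum communication complexity of a DSSP is attained by some T-DSSP, so one may restrict attention to T-DSSPs. For a T-DSSP the identity \eq{tdss-same-c} gives $C=\sum_{k=1}^{n} k a_k$, where $a_k$ is the number of $k$-subsets in the access structure; by \Lref{spe-con} this access structure is a Sperner family, so the counts $a_k$ satisfy \eq{lym}. Hence minimizing $C$ over T-DSSPs is exactly the discrete program \eqref{min}--\eqref{sperner}, whose optimum is $\ceil{\psi^*}$ by \Lref{minc}. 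This furnishes the lower bound $C\geq\ceil{\psi^*}$.

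Next I would exhibit an access structure that achieves this bound. The proof of \Lref{minc} already verifies that the pair $\bigl(\floor{\alpha_i^*},\ceil{\alpha_{i+1}^*}\bigr)$ satisfies LYM; to turn LYM feasibility into a concrete antichain in the Boolean lattice $B_n$ I would invoke a symmetric chain decomposition. Chains meeting both levels $i$ and $i+1$ contribute a comparable pair from which at most one endpoint may be selected, while chains starting at level $i+1$ contribute a standalone level-$(i+1)$ element. A short counting argument then shows that in the non-degenerate regime $\binom{n}{i}<m<\binom{n}{i+1}$ one can realize exactly $\floor{\alpha_i^*}$ level-$i$ and $\ceil{\alpha_{i+1}^*}$ level-$(i+1)$ selections simultaneously; the boundary case $m=\binom{n}{i}$ is trivial, as the antichain just consists of all $i$-subsets.

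Finally I would instantiate a T-DSSP with this access structure, for example the S-DSSP$\{\script{A},n\}$ construction of Section \ref{sec:three}, which is automatically a T-DSSP by the remark following Definition \ref{tdss}. Its communication complexity, computed via \eq{tdss-same-c}, is $i\floor{\alpha_i^*}+(i+1)\ceil{\alpha_{i+1}^*}$, which by the identity \eq{Cstar} in the proof of \Lref{minc} equals $\ceil{\psi^*}$. This matches the lower bound, so the scheme is communication-efficient.

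The main obstacle is the existence step, because LYM is in general only a necessary condition for realizing a prescribed Sperner profile. Restricted to two consecutive levels, however, the symmetric chain decomposition gives a clean sufficiency argument, and this is the only nontrivial combinatorial ingredient needed beyond the optimization analysis already carried out in \Lref{minc}.
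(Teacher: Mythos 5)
Your skeleton --- reduce to T-DSSPs via \Lref{allone}, identify the communication complexity of a T-DSSP with $\sum_k k\,a_k$ via \eq{tdss-same-c}, impose the Sperner/LYM constraints \eq{spe-con-1} and \eq{lym}, and read the optimal value $\ceil{\psi^*}$ off \Lref{minc} --- is exactly the paper's proof, which consists of precisely these citations; instantiating the achiever as an S-DSSP is likewise what \Cref{comeff} does. Where you diverge is in adding an \emph{existence} proof for the prescribed Sperner family, and that step is both unnecessary and wrong. It is unnecessary because the theorem is a conditional statement (``any T-DSSP with the following access structure \dots''), vacuously true if no such family exists; the paper deliberately defers existence to \Cref{comeff}, which is phrased ``if such a Sperner family exists \dots otherwise $C^*$ is only a lower bound.''

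It is wrong for two reasons. First, selecting at most one element from each chain of a symmetric chain decomposition does \emph{not} yield an antichain: elements chosen from different chains can still be comparable (e.g.\ $\{2\}$ from one chain and $\{1,2\}$ from another), so your counting argument certifies nothing. Second, the conclusion itself fails: LYM feasibility is not sufficient for realizability even on two consecutive levels. Take $n=4$, $m=5$, so $i=1$, $\alpha_1^*=2$, $\alpha_2^*=3$, and the prescribed profile is two singletons and three $2$-subsets of $[4]$, which satisfies \eq{lym} with equality since $2/4+3/6=1$. But any two singletons $\{a\},\{b\}$ force every $2$-set of the antichain to lie inside $[4]\setminus\{a,b\}$, and there is only one such $2$-set; hence at most one $2$-set can coexist with two singletons, and the profile $(a_1,a_2)=(2,3)$ is not realizable. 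The correct realizability condition on two consecutive levels is governed by the Kruskal--Katona minimum shadow, which is in general strictly stronger than LYM. You should therefore either drop the existence step entirely and state the result conditionally, as the paper does, or replace the LYM constraint \eq{sperner} by a Kruskal--Katona-type constraint if you want an unconditional construction.
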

\begin{proof}
The theorem follows by \eq{tdss-same-c} and the solution to the discrete optimization problem with properties shown in \Lref{minc}. 
\end{proof}

\begin{corollary}\label{comeff}
If a Sperner family $\sA$ as specified in \Tref{sec:four_main} exists, then S-DSSP$\{ \script{A}, n \}$ is a communication-optimal and perfectly secure DSSP. Otherwise, $C^*$, given in \eq{Cstar}, is a lower bound on the minimum possible communication complexity.

% For a given number of users $m$ and storage nodes $n$, S-DSSP$\{ \script{A}, n \}$ with $\sA$ specified as in \Tref{sec:four_main} is a communication-optimal DSSP.
\end{corollary}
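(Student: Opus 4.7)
The plan is to dispatch the two assertions in parallel, each as a short deduction from the machinery already assembled.

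For the first assertion, I would suppose the required Sperner family $\sA$ exists. Then S-DSSP$\{\sA, n\}$ is a valid DSSP by \Lref{maxuser} and is automatically a T-DSSP (as noted just after Definition \ref{tdss}). By \eq{tdss-same-c} its communication complexity equals $\sum_k k a_k$ with $a_i = \floor{\alpha_i^*}$, $a_{i+1} = \ceil{\alpha_{i+1}^*}$, and the remaining $a_k$ zero; \Lref{minc} then identifies this sum as $C^* = \ceil{\psi^*}$, and \Tref{sec:four_main} certifies that $C^*$ is the minimum achievable communication complexity. Hence S-DSSP$\{\sA,n\}$ is communication-efficient.

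For the second assertion, I would start from an arbitrary DSSP with communication complexity $C$. \Lref{allone} supplies a T-DSSP with communication complexity $\tilde C \leq C$. Let $\tilde a_k$ denote the number of $k$-subsets in its access structure. By \Lref{spe-con} this access structure is a Sperner family, so the $\tilde a_k$'s are nonnegative integers with $\sum_k \tilde a_k = m$ and $\sum_k \tilde a_k/\binom{n}{k} \leq 1$ via \eq{lym}. These are exactly the constraints \eq{w}--\eq{sperner}, so combining \eq{tdss-same-c} with \Lref{minc} yields $\tilde C = \sum_k k\tilde a_k \geq C^*$, and therefore $C \geq C^*$.

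The only point that is not completely immediate is that the discrete optimization problem is stated with the index range $k\in\{1,\dots,\floor{n/2}\}$, while the access counts of an arbitrary T-DSSP might a priori have support for $k>n/2$. I would handle this via the binomial symmetry $\binom{n}{k}=\binom{n}{n-k}$: any mass $\tilde a_k$ at an index $k>n/2$ can be reassigned to index $n-k$ without altering the LYM sum while strictly decreasing the objective $\sum_k k\tilde a_k$ by $(2k-n)\tilde a_k>0$, so restricting to small $k$ only tightens the lower bound. Apart from this bookkeeping step, the corollary is a direct repackaging of \Lref{allone}, \Lref{minc}, and \Tref{sec:four_main}, and I do not anticipate any substantive obstacle.
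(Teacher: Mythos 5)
Your proposal is correct and follows essentially the same route as the paper, which leaves the corollary as an immediate consequence of \Lref{allone}, \Lref{minc}, and \Tref{sec:four_main}. Your explicit treatment of access sets of size $k>\floor{n/2}$ via the symmetry $\binom{n}{k}=\binom{n}{n-k}$ is a worthwhile bit of bookkeeping that the paper only gestures at (its remark on ``reciprocity of the binomial coefficients'' addresses the optimizer, not an arbitrary T-DSSP), but it does not change the substance of the argument.
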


 In particular, if $m$ is a binomial coefficient of $n$, i.e., $m = {n \choose i}$, then a Sperner family $\sA$ exists; $\sA$ is the set of all $i$-subsets of $[n]$. Then S-DSSP$\{ \script{A}, n \}$ is a communication-optimal and perfectly secure DSSP.

It is shown in \Lref{maxuser} that an S-DSSP satisfies the perfect secrecy condition, specified in \eqref{per-sec-con}, since all $s_j$'s are encoded independently by utilizing Shamir's encoder. In fact, the protocols serving the maximum number of the users in Section\,\ref{sec:three} and the communication-optimal DSSPs in this section are S-DSSPs and satisfy the perfect secrecy condition. However, the perfect secrecy comes at the expense of having a large storage overhead, e.g., $SO= \floor{n/2}$ in the S-DSSP serving the maximum number of users. By relaxing the perfect secrecy condition to a weak one, as specified in \eqref{sec-con}, it is possible to significantly reduce the storage overhead of a DSSP as shown throughout the rest of this paper. In particular, it is shown in Section\,\ref{sec:five} that there exists a DSSP with a storage overhead that is optimal and is equal to $1$ while providing weak secrecy, as specified in \eqref{sec-con}.
%=======================================================================%
%                                                                       %
%     5. Optimal storage overhead DSSP                   %              
%                                                                       %
%=======================================================================%
\section{ DSSPs with Nearly Optimal and Optimal Storage Overhead}
\label{sec:five}

In this section, two schemes are proposed towards constructing DSSPs with optimal storage overhead while providing weak secrecy. In the first scheme, a few number of random symbols are used as external \textit{random seed} for the code construction. Consequently, the storage overhead is slightly greater than one which is shown to be the minimum achievable storage overhead. Hence, the first scheme is referred to as the DSSP with \emph{nearly optimal} storage overhead. Then, the first scheme is modified in order to construct DSSPs with storage overhead equal to one which is optimal. However, the DSSP with nearly optimal SO offers a better encoding complexity and latency than the one with optimal SO, as we will se later in Section\,\ref{subA}. In the second scheme, we extract the \textit{random seed} needed in the first scheme from the secret shares of a few users, while ensuring that the weak secrecy condition, specified in \eqref{sec-con}, is still guaranteed. In the resulting protocol, no external randomness is required, and hence, the total size of data to be stored on storage nodes is equal to the total size of secrets. This shows that the minimum achievable storage overhead is one and the second scheme achieves it. Furthermore, the developed method is applied to communication-optimal DSSPs constructed in Section\,\ref{sec:four}, thereby providing DSSPs that have both the optimal communication complexity and the optimal storage overhead under the weak secrecy constraint.

 \subsection{DSSP with Nearly Optimal Storage Overhead} \label{subA}
 
Let $m$ and $n$ denote the number of users and storage nodes, respectively. Consider a system with access structure $\sA$ consisting of $m$ subsets $A_j$, for $j \in [m]$, of $[n]$ forming a Sperner family. Such a condition is necessary, by \Lref{spe-con}, in order to ensure existence of a valid DSSP. This condition is also sufficient for the proposed protocol in this Section. Note that in the construction of DSSPs we consider the access structure to be a part of the protocol design, as defined in Definition\,\ref{ssd}. However, it is still interesting to see necessary and/or sufficient conditions on the access structure in the proposed DSSPs. Also, suppose that $q > \max_{j \in [m]} |A_j|$ and let $\gamma_1,\gamma_2,\dots,\gamma_{q-1}$
denote the non-zero elements of $\Fq$.

First, consider a S-DSSP protocol, as described in Section\,\ref{sec:three}, which applies Shamir's secret sharing method to each secret independently. 
%straightforward protocol by independently applying Shamir's secret sharing method to each secret.
In particular, for $j \in [m]$ and $l \in [k-1]$, where $k = |A_j|$, $p_{j,l}$ is chosen independently and uniformly at random from $\Fq$. Then the polynomial $P_j(X)$ is constructed as  
\begin{equation}\label{polynomial}
P_j{(x)}=s_j+ \sum_{l=1}^{k-1} p_{j,l}x^l.
\end{equation}
Then the evaluations $P_j(\gamma_1),...,P_j(\gamma_k)$ are stored at nodes indexed by elements in $A_j$, the access set of user $j$. In this protocol the storage overhead is $\sum_{j=1}^m |A_j|/m$. Note that this can be much larger than one which is the lower bound on the storage overhead, as stated in Section\,\ref{sec:two}.

%In the proposed protocol, any arbitrary access structure can be considered as long as it is a Sperner family. In other words, we do not need to design the access structure in order to achieve a nearly optimal storage overhead and it can be considered to be given to us.  In order to have a more balanced storage profile, one can pick a particular access structure, which will be discussed in \ref{subC}, but for now we consider that the access structure is predefined. Our protocol does not depend on the way the access structure is chosen and works as long as the it is a Sperner family.

In order to reduce the storage overhead, the idea is to ensure that for all $j \in [m]$, the evaluation of $P_j$'s over the evaluation points $\gamma_i$'s have significant overlaps with each other. This idea is elaborated through the rest of this section.

Suppose that $n$ symbols, denoted by $y_1,y_2,\hdots,y_n$, are chosen independently and uniformly at random from 
$\Fq$. Roughly speaking, these symbols serve as \textit{random seed} in our proposed protocol. Initially, $y_i$ is stored in the storage node $i$, for all $i \in [n]$. Then for each user $j$, only one data symbol is generated through the encoding of its secret $s_j$, as discussed next, and stored in one of the nodes in $A_j$. This implies that the total size of data stored at the storage nodes, in terms of the number of $\F_q$-symbols, is $n+m$, and the storage overhead is $1+\frac{n}{m}$.

Consider user $j$, for some $j \in [m]$. Without loss of generality suppose that $A_j = \{1,2,\hdots,k\}$. To encode the secret $s_j$, we first construct $P_j$ by considering the following system of linear equations: 
%%%%%%%%%%%%%%%%%%%%%%
\iffalse
$$
P_{j}(\gamma_1)=y_1, P_{j}(\gamma_2)=y_2,\hdots,P_{j}(\gamma_{k-1})=y_{k-1}%,P_{j}(\gamma_{k})=y_{j+n}
$$
\fi
%%%%%%%%%%%%%%%%%%%%%%%
 \begin{equation}
 \label{Pj-eq}
  \left\{
   \begin{array}{llll } 
   P_{j}(\gamma_1)=y_1,\\
   P_{j}(\gamma_2)=y_2,\\
   \quad \vdots& \\
   %P_{1}(\alpha_{k-1})=y_{k-1},&  P_{2}(\alpha_{k-1})=y_k,&\ldots & P_{m}(\alpha_{k-1})=y_{k-2}\\
   P_{j}(\gamma_{k-1})=y_{k-1},
   %P_{j}(\gamma_{k})=y_{j+n}
     \end{array} \right.
  \end{equation} 
which can be rewritten as
 \begin{equation}
 \label{eq_p}
 %  \mkern-18mu\mkern-18mu\mkern-18mu
   \left \{
   \begin{array}{l}
   s_j+p_{j,1}\gamma_1+p_{j,2}\gamma_1^{2}+\dots+p_{j,k-1}\gamma_1^{k-1}=y_1,\\
    s_j+p_{j,1}\gamma_2+p_{j,2}\gamma_2^{2}+\dots+p_{j,k-1}\gamma_2^{k-1}=y_2,\\
    \\
    \vdots\\
         s_j+p_{j,1}\gamma_{k-1}+p_{j,2}\gamma_{k-1}^{2}+\dots+p_{j,k-1}\gamma_{k-1}^{k-1}=y_{{k-1}},
    \\
 
   \end{array}\right.
   \end{equation}
and alternatively, in the matrix form, as

   \begin{equation}\label{system}
   \bold{D} \ \bold{p}_j=\tilde{\bold{y}}_j,
   \end{equation}
   where 
$$
\bold{p}_j = (p_{j,1},p_{j,2},\dots,p_{j,k-1})^T, 
$$
\be{yjtilde}%\label{yjtilde}
\tilde{\bold{y}}_j = (y_1-s_j,y_2-s_j,\dots,y_{{k-1}}-s_j)^T,
\ee
and  
\be{bd_eq}
\bold{D}\ \deff \
\begin{bmatrix}
\gamma_1 & \gamma_1^2 & \dots & \gamma_1^{k-1}\\
\gamma_2& \gamma_2^2 & \dots & \gamma_2^{k-1}\\
\vdots&&\ddots&\vdots\\
\gamma_{k-1} & \gamma_{k-1}^2 & \dots & \gamma_{k-1}^{k-1}\\
\end{bmatrix}.
\ee
Since $\bold{D}$ is a Vandermonde matrix, it is invertible and \eq{system} has a unique solution, i.e.,
\be{pj_eq}
\bold{p}_j=\bold{D}^{-1}\tilde{\bold{y}}_j.
\ee
By selecting such a $\bold{p}_j$ as the vector of coefficients of $P_j$, except its constant coefficient which is equal to $s_j$, we somewhat \textit{enforce} the first $k-1$ evaluations of $P_j$ to be equal to the random seed generated a priori. Then $P_{j}(\gamma_{k})=y_{j+n}$ is the new data symbol generated in the encoding process of $s_j$. We refer to $y_{j+n}$ as the data symbol associated with user $j$. This encoding process can be also described using the following equation:

\begin{equation}
\label{eq_newdata}
y_{j+n}=s_j+\boldsymbol{a}^{t}\ \tilde{\bold{y}}_j,
\end{equation}
where 
\begin{equation}
\label{adeff}
\boldsymbol{a}^{t}=
\left [
\begin{array}{llll}
\gamma_k&
\gamma_k^2&
\dots&
\gamma_k^{k-1}
\end{array}\right  ]
\bold{D}^{-1},
\left.
\right.
\end{equation}
\iffalse
\begin{equation} \label{dfa}
 y_{j+n}=s_j+
\left [
\begin{array}{llll}
\gamma_k&
\gamma_k^2&
\dots&
\gamma_k^{k-1}
\end{array}\right  ]
\bold{D}^{-1},
\left.
\right.
\tilde{\bold{r}}_j,
\end{equation}
\fi
and $\tilde{\bold{y}}_j$ and $\bold{D}$ are defined in \eq{yjtilde} and \eq{bd_eq}, respectively.

The new data symbol $y_{j+n}$, determined in \eqref{eq_newdata}, is stored in the storage node $k$, the node with the largest index in $A_j$. Note that we could initially pick any $k-1$ nodes in $A_j$ and use the random seed stored therein to construct $P_j$, as specified by \eqref{Pj-eq}. Then the generated data symbol $y_{j+n}$ would be stored in the remaining node of $A_j$. In the next section, we discuss methods to do this in a more structured way in order to obtain a \textit{balanced} storage profile. 

The proposed protocol together with the storage profile are demonstrated in an example, discussed next.

\begin{exmp}\label{nearlyoptimalDSSPexmple}

Let $n=5, m=10$.  
\iffalse
Assume that $s_j= j \ mod \ 3$ for all $j=1,2,\hdots,10$. Random seeds are chosen uniformly and i.i.d from $\{0,1,2\}$, for instance assume $(r_1=1,r_2=2,r_3=1,r_4=0,r_5=2)$.
\fi
Suppose that the access sets are all $2$-subsets of $[5]$ given as follows:
\begin{align*}
&A_1\hspace{-1mm}=\hspace{-1mm}\{1,2\},\hspace{-.3mm}A_2\hspace{-1mm}=\hspace{-1mm}\{2,4\},\hspace{-.3mm}A_3\hspace{-1mm}=\hspace{-1mm}\{3,5\},\hspace{-.3mm}A_4\hspace{-1mm}=\hspace{-1mm}\{1,3\},\hspace{-.3mm}A_{5}\hspace{-1mm}=\hspace{-1mm}\{2,5\},\\
&A_6\hspace{-1mm}=\hspace{-1mm}\{4,5\},\hspace{-.3mm}A_7\hspace{-1mm}=\hspace{-1mm}\{2,3\},\hspace{-.3mm}A_8\hspace{-1mm}=\hspace{-1mm}\{1,5\},\hspace{-.3mm}A_9\hspace{-1mm}=\hspace{-1mm}\{3,4\},\hspace{-.3mm}A_{10}\hspace{-1mm}=\hspace{-1mm}\{1,4\}.
\end{align*}
Also, let $q=3$, and non-zero and distinct evaluation points $\gamma_1=1,\gamma_2=2$, as elements of $\F_3$, are considered.
%%%%%%%%%%%%%%%%%%%%%%
\iffalse
Also, $\tilde{\bold{r}}_j=r_{j_1}-s_j$ , 
$
\bold{D}^{-1}=1
$ which simplifies \eqref{eq_newdata} to
$$
 y_{j+n}=s_j+
2(r_{j_1}-s_j)=-s_j+2r_{j_1}
$$
where $r_{j_1}$ is one of the $r_i$'s user $j$ has access to.
\fi
%%%%%%%%%%%%%%%%%%%%%%%%%%%%
Then the encoded data symbols, generated by \eqref{eq_newdata}, together with the storage profile are shown in Table \ref{nearlyoptimalexample}.

\begin{table}[h!]
\centering
 \begin{tabular}{||c c c c c||} 
 \hline
 Node 1 & Node 2 & Node 3 & Node 4 &Node 5 \\ [0.5ex] 
 \hline\hline
 $y_1$ & $y_2$ & $y_3$ & $y_4$ & $y_5$ \\ 
 \hline
 $ $ & $2y_1-s_1$ & $ 2y_1-s_4$ & $2y_2-s_2 $ & $ 2y_3-s_3$ \\
 \hline
  &  & $2y_2-s_7$ &$2y_3-s_9$& $2y_2-s_5$ \\
 \hline
  &  &  & $2y_1-s_{10}$&$2y_4-s_6$ \\
 \hline
  &  &  & & $2y_1-s_8$\\ [1ex] 
 \hline
\end{tabular}
\caption{ Storage Profile in Example \ref{nearlyoptimalDSSPexmple}}
\label{nearlyoptimalexample}
\end{table}
\end{exmp}

To prove that the weak secrecy condition, specified in \eq{sec-con}, is satisfied the following lemma is useful.

\begin{lemma}\label{fulent}
Data symbols $y_1,y_2,\dots,y_{m+n}$ generated according to the proposed protocol are uniformly distributed and mutually independent. In other words, the vector of all data symbols is full entropy.

\end{lemma}

\begin{proof}
Recall that the first $n$ data symbols $y_1,y_2,\dots,y_n$ are initially selected independently and uniformly at random. Hence, $(y_1,y_2,\hdots,y_n)$ is a full entropy vector. Then
\begin{align}
\label{fulent-eq1}
H(y_{n+1},\hdots,y_{n+m} |y_1,\hdots,y_n) &=H(s_1,\hdots,s_m |y_1,\hdots,y_n) \\ 
\label{fulent-eq2}
&=H(s_1,\hdots,s_m) \\
\label{fulent-eq3}
&= m \log q,
\end{align}
where \eq{fulent-eq1} holds since \eqref{eq_newdata} implies that given $(y_1,\hdots,y_n)$ there is a one-to-one mapping between $(y_{n+1},\hdots,y_{n+m})$ and $(s_1,\hdots,s_m)$, \eqref{fulent-eq2} holds since the random seed $(y_1,\hdots,y_n)$ is independent of $(s_1,\hdots,s_m)$, and \eqref{fulent-eq3} holds since it is assumed that the vector of all secrets is full entropy. Using this together with the chain rule we have
\begin{align*}
H(y_1,\hdots y_{n+m})=&H(y_1,\hdots,y_n)\\&+ H(y_{n+1},\hdots,y_{n+m} |y_1,\hdots,y_n)\\
=& n\log q + m \log q\\
=& (n+m)\log q,
\end{align*}
which completes the proof.

\end{proof}

The following theorem summarizes the results of this subsection.

\begin{theorem}\label{nearlyoptimalsecurity}
The proposed protocol in this Section is a weakly secure DSSP satisfying all conditions in Definition \ref{ssd}. 
\end{theorem}

\begin{proof} Note that user $j$, with $|A_j| = k$, has access to all $k$ evaluations of its associated polynomial $P_j$ and, consequently, can recover $s_j$ by invoking Shamir's secret decoder. Hence, the correctness condition is satisfied in this protocol. What remains to show is that the weak secrecy condition, specified in \eqref{sec-con}, is also satisfied.

Note that the access sets are assumed to form a Sperner family. Hence, for $j\neq l$, with $|A_j|=k$, there exists at least one $\gamma_i$, $i \in [k]$, such that $P_j(\gamma_i)$ is not accessed by user $l$. Let this data symbol $P_j(\gamma_i)$ be denoted by $y_j^{(-l)}$. Then we have
\begin{equation}
\begin{split}
\label{thm1-eq1}
H(s_j |\bold{y}_l) \stackrel{\text{(a)}}{\geq}\ & H(s_j| \ \bold{y} \ \setminus  \ y_j^{(-l)}) \stackrel{\text{(b)}}= H(y_j^{(-l)} | \ \bold{y} \ \setminus  \ y_j^{(-l)}) \\
&\stackrel{\text{(c)}}= H(y_j^{(-l)})\stackrel{\text{(d)}} = \log q,
\end{split}
\end{equation}
 where (a) holds since conditioning does not increase the entropy, (b) holds because given any $k-1$ evaluations of $P_j$, out of $k$ available ones, there is a one-to-one mapping between the remaining evaluation of $P_j$ and $s_j$, (c) holds because data symbols are independent, and (d) holds because data symbols are uniformly distributed. Also, note that
\begin{equation}
\label{thm1-eq2}
H(s_j |\bold{y}_l) \leq H(s_j) \leq \log q. 
\end{equation}
Combining \eqref{thm1-eq1} with \eqref{thm1-eq2} implies that
$$
H(s_j |\bold{y}_l)=H(s_j)=\log q,
$$
which completes the proof. %Note that the proof also implies that  all secrets at the input of the encoder are necessarily uniformly distributed.

\end{proof}

As mentioned before, the storage overhead of the proposed protocol is $1+\frac{n}{m}$. This is very close to the optimal value, which is shown to be equal to one in Section\,\ref{optimaldssp}, provided that $m$ is much larger than $n$. Hence, we refer to the proposed protocol in this subsection as the DSSP with nearly optimal storage overhead.  
\iffalse
In order to reduce storage overhead to the optimal value $1$, we need to store only $m$ symbols in the storage nodes instead of $km$ symbols as in the straightforward approach. Therefore, the idea is to generate one random seed for each of the storage nodes using $n$ number of secrets while at the same time, by assigning proper access sets to the users associated with these secrets, users can reconstruct their secrets. Next, for any of the remaining users, one new symbol is generated using the associated secret, and $k-1$ previous data symbols exist in the access set corresponding to it. Note that the total amount of data symbols generated in this protocol is exactly equal to $m$ which results in storage overhead optimality. Details for the encoding procedure are as follows:
\fi
\subsection{DSSP with Optimal Storage Overhead} \label{optimaldssp}

The parameters are same as those considered in Section\,\ref{subA}. However, an extra condition on the field size $q$ and a certain choice of access sets for $n$ of the users, as explained later, will be required in the proposed protocol. 

In the proposed protocol in this section, no external randomness is required and in fact, the optimal storage overhead equal to one is attained. The idea is to modify the proposed DSSP with nearly optimal SO, constructed in Section\,\ref{subA}, by initially encoding secrets of $n$ users into $n$ data symbols and then utilizing them as the random seed required in the encoding of the remaining $m-n$ secrets.

%while, at the same time, users associated with these secrets can reconstruct their secret by downloading some of these symbols. In other words, first, $n$ data symbols convey $n$ secrets while also establish a random seed for other users' encoder. Then, applying the nearly optimal construction for $(m-n)$ remaining users generates  $(m-n)$ more data symbols, which together with $n$ data symbols generated in the previous mapping results in a total number of $m$ data symbols or equivalently, SO=1. 

To initialize the protocol, $n$ users, indexed by $1,2,\dots,n$, are picked. The access sets for these users are specified as follows: 
\be{cyclicset}
A_j=\{j,j+1,\hdots,j+k-1\},
\ee
for $j=1,2,\dots,n$, where the indices of storage nodes are considered modulo $n$, i.e., ${n+l} = l$. The parameter $k$ is arbitrary as long as $1\leq k < n$. When combining DSSP with optimal storage overhead toegther with communication-optimal DSSP, $k$ needs to be picked accordingly, e.g., $k = i$ or $k=i+1$, where $i$ is given by \Tref{sec:four_main}. The remaining access sets can be arbitrary as long as $\sA$ form a Sperner family.

The evaluation polynomials $P_j$'s, for $j \in [n]$, are constructed in such a way that the following system of linear equations is satisfied:
\iffalse
pick any $n$ number of secrets, say $s_1,s_2,\hdots,s_n$ and consider the following system of equations for the polynomials associated with users $1,2,\hdots,n$:

these symbols convey secrets for these $n$ users but are random for other users because secrets are uniform and mutually independent. This secrets by using these symbols, all of these secrets can be reconstructed if appropriate access sets are assigned to the associated users. map $s_1,s_2,\hdots,s_n$ to $n$ data symbols, say $y_1,y_2,\hdots,y_n$, which such that by assigning a proper access set to the users indexed by $1,2,\hdots,n$, these users can reconstruct their own secret without any leakage of information to the other users.
The goal of this step is to map $s_1,s_2,\hdots,s_n$ to $n$ data symbols, say $y_1,y_2,\hdots,y_n$, such that by assigning a proper access set to the users indexed by $1,2,\hdots,n$, these users can reconstruct their own secret without any leakage of information to the other users. Furthermore, the data symbols generated in this step are used in encoding procedure of the remaining secrets. Therefore, the idea is to ensure that the evaluation of $P_j$'s over the evaluation points $\gamma_i$'s have significant overlaps with each other. In particular, we consider the following system of linear equations:
\fi
 \begin{equation}
 \label{eq_SO}
  \left\{
   \begin{array}{llll } 
   P_{1}(\gamma_1)=y_1,&  P_{2}(\gamma_1)=y_2,& \ldots & P_{n}(\gamma_1)=y_n,\\
   P_{1}(\gamma_2)=y_2,&  P_{2}(\gamma_2)=y_3,&\ldots & P_{n}(\gamma_2)=y_1,\\
   \quad \vdots& \quad \vdots& \ddots &  \quad \vdots \\
   %P_{1}(\alpha_{k-1})=y_{k-1},&  P_{2}(\alpha_{k-1})=y_k,&\ldots & P_{m}(\alpha_{k-1})=y_{k-2}\\
   P_{1}(\gamma_{k})=y_{k},&  P_{2}(\gamma_{k})=y_{k+1},&\ldots & P_{n}(\gamma_{k})=y_{k-1},
     \end{array} \right.
  \end{equation}  
where $P_j$'s have the form specified by \eqref{polynomial}. Note that the system in \eq{eq_SO} has $nk$ linear equations. Also, the total number of variables is $nk$. This is because there are $(k-1)n$ unknown variables $p_{j,l}$'s, for $j \in [n]$ and $l \in [k-1]$, together with $n$ unknown variables $y_1,y_2,\dots,y_n$. Next, we show that the equations in \eq{eq_SO} are linearly independent, under certain conditions, thereby establishing that the system has a unique solution for $p_{j,l}$'s and $ y_i$'s. We further show how to store the resulting $y_i$'s in the storage nodes according to a certain access structure $\sA$. 

The system of linear equations in \eq{eq_SO} can be rewritten as:
 \begin{equation}\label{matrixform}
 \bold{A} \bold{b}+\bold{s}'=0,
 \end{equation}
 where 
\begin{align*}
 \bold{b}&=(p_{1,1},...,p_{1,k},p_{2,1},...,p_{2,k},...,p_{n,1},...,p_{n,k},y_1,...,y_n)^T,\\
\bold{s}'&=(s_1,...,s_1,s_2,...,s_2,...,s_n,...,s_n)^T,
 \end{align*}
 where each $s_j$ is repeated $k$ times in $\bold{s}'$. Also, $\bold{A}_{(kn)\times( kn)}$ is equal to
\be{matrixA}
%\footnotesize
%\setlength{\arraycolsep}{2.5pt}
%\medmuskip = 2mu % default: 4mu plus 2mu minus 4mu
\left [
   \setlength{\arraycolsep}{3pt}
   \renewcommand{\arraystretch}{0.7}
\begin{array}{cccccccc|ccccc}
    \gamma_{1}& \dots &\gamma_{1}^{k-1} &  &  &  &   &&  -1 & &&&\\
    \vdots &  \ddots & \vdots&  & &  &&&& \ddots && &  \\
    \gamma_{k}& \dots & \gamma_{k}^{k-1}&&&&& &&  & -1&&\\
     &&&\gamma_{1} &  \dots &\gamma_{1}^{k-1}&&&&-1 &  && \\
    &&&\vdots &  \ddots & \vdots&&  &&& \ddots && \\
    &&&\gamma_{k}&  \dots & \gamma_{k}^{k-1}& & && & & -1 &\\
    &&&&&&\ddots&&&&&\ddots&\\
    &&&&&&&&&&&&\\
    &&&&&&&&&&&&
\end{array}
\right],
\ee
which consists of $n$ copies of the following $k \times(k-1)$ matrix
\[
\bold{B} \ \deff\
\begin{bmatrix}
    \gamma_{1} & \gamma_{1}^2 & \dots &\gamma_{1}^{k-1} \\
    \gamma_{2} & \gamma_{2}^2 & \dots &\gamma_{2}^{k-1} \\
    \vdots & \vdots & \ddots & \vdots \\
    \gamma_{k}& \gamma_{k}^2 & \dots & \gamma_{k}^{k-1}  
\end{bmatrix},
\]
together with $n$ copies of $-I_{k\times k}$, each shifted one column to the right, consecutively. The goal is then to show that $\bold{A} $ is non-singular, under certain conditions, as specified in the next lemma. To simplify such conditions and also for ease of calculation, let $\gamma_i = \gamma^i$, for $i = 1,2,\dots,k$, where $\gamma$ is a primitive element of $\Fq$. 

\begin{lemma}\label{nonsingular}
 If $ (q-1) \notdivides in$ for $i \in [k]$, then the matrix $\bold{A}$, specified in \eq{matrixA}, is non-singular. 
\end{lemma}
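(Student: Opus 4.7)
The plan is to prove that $\bold{A}$ has trivial kernel over $\Fq$. I view a kernel vector as a solution to the homogeneous version of the system \eq{eq_SO} obtained by setting every secret $s_j$ to zero; in that setting each polynomial $P_j(x)$ reduces to a polynomial of degree at most $k-1$ with $P_j(0)=0$, while the cyclic assignment of right-hand sides still reads $P_j(\gamma^i)=y_{f(j,i)}$ with $f(j,i)=((j+i-2)\bmod m)+1$ for every $j\in[m]$ and $i\in[k]$.

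First I would exploit the cyclic structure by extending $P_j$ and $y_r$ periodically in their indices, writing $\tilde{P}_j$ and $\tilde{y}_r$ so that $\tilde{P}_j(\gamma^i)=\tilde{y}_{j+i-1}$ for every integer $j$ and every $i\in[k]$. The key step is to establish the functional equation
\begin{equation*}
\tilde{P}_{j+1}(x)=\tilde{P}_j(\gamma x).
\end{equation*}
To justify it, note that both sides are polynomials in $x$ of degree at most $k-1$, both vanish at $x=0$ by the homogeneous assumption, and for $i\in\{1,\dots,k-1\}$ the shift identity yields $\tilde{P}_{j+1}(\gamma^i)=\tilde{y}_{j+i}=\tilde{P}_j(\gamma^{i+1})$. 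Hence they agree at the $k$ distinct points $0,\gamma,\gamma^2,\dots,\gamma^{k-1}$ and must coincide as polynomials.

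Next, iterating the functional equation and using the $m$-periodicity $\tilde{P}_{1+m}=\tilde{P}_1$ produces $\tilde{P}_1(\gamma^m x)=\tilde{P}_1(x)$. Writing $\tilde{P}_1(x)=\sum_{l=1}^{k-1}p_{1,l}\,x^l$ and matching coefficients yields $p_{1,l}(\gamma^{ml}-1)=0$ for every $l\in\{1,\dots,k-1\}$. Since $\gamma$ is a primitive element of $\Fq$, one has $\gamma^{ml}=1$ iff $(q-1)\mid ml$, which is precisely excluded by the hypothesis $(q-1)\notdivides im$ for $i\in[k]$. Therefore $\tilde{P}_1\equiv 0$; propagating back through the functional equation forces $\tilde{P}_j\equiv 0$ for every $j$, so each $p_{jl}$ (a coefficient of the zero polynomial) and each $y_r$ (a value of such a polynomial at some $\gamma^i$) must vanish.

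The main obstacle will be establishing the functional equation, and the subtle point is that it crucially relies on the constant-term-free condition $\tilde{P}_j(0)=0$ furnished by the homogeneous setting: $x=0$ supplies the $k$-th agreement point needed to identify two polynomials of degree at most $k-1$. Without this ``free'' extra point one would have only the $k-1$ matches at $\gamma,\dots,\gamma^{k-1}$ and the identification would fail. Once the functional equation is in hand, the remaining argument is a standard root-of-unity style coefficient matching.
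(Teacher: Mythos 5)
Your proof is correct, but it takes a genuinely different route from the paper's. The paper argues on the rows of $\bold{A}$: it supposes a vanishing linear combination of rows, uses the full row rank of the Vandermonde block $\bold{B}^t$ to collapse the multipliers to scalar multiples $\eta_j\bold{v}$ of a single kernel vector, and then reduces non-singularity to that of an $m\times m$ circulant matrix $\bold{V}$, invoking the criterion $\gcd(x^m-1,V(x))=1$ and locating the roots of $V(x)$ at $\gamma,\dots,\gamma^{k-1}$. You instead work with the column kernel, i.e.\ the homogeneous system with all $s_j=0$, and your functional equation $\tilde{P}_{j+1}(x)=\tilde{P}_j(\gamma x)$ — made possible exactly because $\tilde{P}_j(0)=0$ supplies the $k$-th agreement point, as you note — lets you iterate around the cycle and conclude $\tilde{P}_1(\gamma^m x)=\tilde{P}_1(x)$, whence $p_{1l}(\gamma^{ml}-1)=0$. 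Both arguments ultimately spend the hypothesis $(q-1)\notdivides im$ in the same way (to guarantee $\gamma^{im}\neq 1$ for the relevant exponents, of which only $i\in[k-1]$ are actually needed), but yours is more elementary and self-contained: it avoids the circulant-determinant machinery and the external citation, and it makes the role of the cyclic shift structure transparent. The paper's version, in exchange, exposes the circulant structure explicitly, which is what motivates the polynomial-gcd condition. One small point worth making explicit in your write-up: the $k$ interpolation points $0,\gamma,\dots,\gamma^{k-1}$ are distinct because $\gamma$ is primitive and $k\leq q-1$, which is already implicit in the requirement that $\gamma_1,\dots,\gamma_k$ be distinct nonzero evaluation points.
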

 \begin{proof}
 Let $\bold{r}_{i,j}$ denote the row in $\bold{A}$ indexed by $(j-1)k+i$, for $j \in [n]$ and $i \in [k]$. We show that $\bold{r}_{i,j}$'s are linearly independent. Suppose that a linear combination of $\bold{r}_{i,j}$'s is zero, i.e.,
 \begin{equation*}
 \lambda_{1,1}\bold{r}_{1,1}+\dots +\lambda_{k,1}\bold{r}_{k,1}+\dots+\lambda_{1,n}\bold{r}_{1,n}+\dots +\lambda_{k,n}\bold{r}_{k,n}=0.
 \end{equation*}
 Hence,
 \begin{equation}\label{recsys}
 \bold{B}^T \bold{\lambda}_i=\bold{0},
 \end{equation}
 where $ \bold{\lambda}_i=(\lambda_{i,1},\lambda_{i,2},\dots,\lambda_{i,k})^T $ for all $i$ and
 \[
\bold{B}^T =
\begin{bmatrix}
\gamma & \gamma^2 & \dots & \gamma^k\\
\gamma^2 & (\gamma^2)^2& \dots &(\gamma^2)^k\\
&&&\\
\vdots &&&\\
&&&\\
\gamma^{k-1} &(\gamma^{k-1})^2 &\dots&(\gamma^{k-1})^k
\end{bmatrix}.
\]
Furthermore,
  \begin{equation}\label{sumzero}
   \left\{
   \begin{array}{l} 
   \lambda_{1,1}+\lambda_{2,n}+\lambda_{3,n-1}+\dots+\lambda_{k,n-k+1}=0,\\
   \lambda_{1,2}+\lambda_{2,1}+\lambda_{3,n}+\dots+\lambda_{k,n-k+2}=0,\\
   \vdots \\
   \lambda_{1,n-1}+\lambda_{2,n-2}+\lambda_{3,n-3}+\dots+\lambda_{k,n-k}=0,\\
   \lambda_{1,n}+\lambda_{2,n-1}+\lambda_{3,n-2}+\dots+\lambda_{k,n-k+1}=0.
   \end{array} \right.
  \end{equation}   
 Since $\bold{B}^T$ is a Vandermonde matrix, it is full row rank. Consequently, its kernel space is one dimensional. This together with 
 \eq{recsys} result in $\bold{\lambda}_j=\eta_j \bold{v}$, where $\eta_j$ is a scalar coefficient and $\bold{v}$ is a non-zero vector in the kernel of $\bold{B}^T$. Let $\bold{v} = (v_1,v_2,\dots,v_k)^T$. Then, one can write $\lambda_{i,j}=\eta_j v_i$ for all $i$ and $j$. Substituting this in \eq{sumzero} results in:
 \begin{equation*}
 \bold{V} \bold{\eta}=\bold{0},
 \end{equation*}
 where $\bold{\eta}=(\eta_1,\eta_2,\dots,\eta_n)^T$ and
 \[
\bold{V} \ \deff\
\begin{bmatrix}
    v_1 & 0 & 0& \dots &v_k & v_{k-1}&\dots &v_2\\
    v_2&v_1&0&\dots&0&v_k&\dots&v_3 \\
    %\multirow{3}{*}{\vdots}&&\multicolumn{4}{c}{\multirow{3}{*}{\ddots}}&&\multirow{3}{*}{\vdots} \\
    \\
    \vdots&&&&\ddots&&&\vdots\\
    \\
    0&\dots&0&v_k&\dots&v_1&\dots&0
\end{bmatrix}_{n\times n} .
\]
If $\bold{V}$ is non-singular, all $\eta_j$'s are zero. This implies that all $\lambda_{i,j}$'s are also zero. Note that all $v_i$'s can not be zero because the kernel space of $\bold{B}^T$ is one dimensional. Therefore, $\bold{A}$ is non-singular if and only if $\bold{V}$ is non-singular. Note that $\bold{V} $ is a circulant matrix and is non-singular if and only if $\gcd(x^n-1,V{(x)})=1$ \cite{geller2004circulant}, where $V(X)$ is the associated polynomial of the circulant matrix $\bold{V}$:
 \begin{equation*}
 V(x)=v_1+v_2x+v_3x^2+\dots+v_kx^{k-1}. 
 \end{equation*}
Note that $\bold{B}^T\bold{v}=0$. Therefore, $\gamma^iV(\gamma^i)=0$, for $i \in [k-1]$. Equivalently, all $\gamma^i$'s are roots of $xV{(x)}$. Since the degree of $V{(x)}$ is at most $(k-1)$, it has at most $(k-1)$ roots. Therefore, we can write
\begin{equation*}
V{(x)}=c_0 (x-\gamma)(x-\gamma^2)\dots(x-\gamma^{k-1}),
\end{equation*}
for some constant $c_0$.
Since $\gamma$ is a primitive element of $\mathbb{F}_q$ and $ (q-1) \notdivides in$ for all $i \in [k]$, then $\gamma^{in} \neq 1$ for all $1\leq i \leq k$. In other words, none of $V(X)$'s roots is an $n$-th root of unity. 
Hence, $x^n-1$ and $V{(x)}$ have no roots in common implying that $\gcd\bigl(x^n-1,V{(x)}\bigr)=1$. Consequently, $\bold{V}$ is non-singular. This implies that $\bold{A}$ is also non-singular, which completes the proof.
 \end{proof} 
 
 \begin{corollary}
 \label{cor_onetoone}
 If the condition in \Lref{nonsingular} is satisfied, then \eq{eq_SO} defines a one-to-one mapping between $(s_1,s_2,\dots,s_n)$ and $(y_1,y_2,\dots,y_n)$.
\end{corollary}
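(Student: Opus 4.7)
The plan is to leverage the non-singularity of $\bold{A}$, established in \Lref{nonsingular}, to show both that the map $\bold{s} \mapsto \bold{y}$ defined implicitly by \eq{eq_SO} is well-defined (the forward direction) and that it is injective (which, for an $\Fq$-linear map between finite vector spaces of the same dimension, is equivalent to bijectivity).

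For the forward direction, I would simply note that \eq{matrixform} can be rewritten as $\bold{b} = -\bold{A}^{-1}\bold{s}'$, so for each choice of $\bold{s}$—which linearly determines $\bold{s}'$ by repeating each entry $k$ times—the full vector $\bold{b}$, and in particular its last $m$ coordinates $y_1,\ldots,y_m$, is uniquely determined. Hence the mapping $\bold{s} \mapsto \bold{y}$ is a well-defined $\Fq$-linear map from $\Fq^m$ to $\Fq^m$, and it suffices to verify injectivity.

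For injectivity, I would argue via polynomial interpolation: if $\bold{y} = \bold{0}$, then \eq{eq_SO} forces $P_j(\gamma_i) = 0$ for every $i \in [k]$ and every $j \in [m]$. Since the $\gamma_i$'s are $k$ distinct elements of $\Fq$ and each $P_j$ is a polynomial of degree at most $k-1$, this implies $P_j \equiv 0$, and in particular $s_j = P_j(0) = 0$. Equivalently, the inverse map can be described explicitly: given $\bold{y}$, interpolate for each $j$ the unique polynomial $P_j$ of degree at most $k-1$ through the $k$ prescribed values in \eq{eq_SO}, and read off $s_j = P_j(0)$.

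There is no real obstacle here, as the difficult content of the result lies entirely in \Lref{nonsingular}; the corollary is essentially a restatement of that lemma in terms of the secret-to-stored-data mapping, combined with the elementary fact that a polynomial of degree at most $k-1$ is uniquely determined by its values at $k$ distinct points.
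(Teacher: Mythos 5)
Your proposal is correct and follows essentially the same route as the paper: the forward direction comes from the non-singularity of $\bold{A}$ (so $\bold{b}=-\bold{A}^{-1}\bold{s}'$ uniquely determines $\bold{y}$ from $\bold{s}$), and the inverse direction comes from the fact that each $P_j$, having degree at most $k-1$, is uniquely recovered from its $k$ prescribed values in \eq{eq_SO}, whence $s_j=P_j(0)$. The additional kernel/linearity framing you give is a harmless rephrasing of the same interpolation argument.
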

\begin{proof}
Since the condition in \Lref{nonsingular} is satisfied, then $\bA$ is non-singular. This implies that for any vector of secrets $s_j$'s, there is a unique solution for $y_j$'s. Furthermore, for given $y_j$'s, \eq{eq_SO} defines $n$ interpolation equations of polynomials $P_j(x)$ of degree at most $k-1$, for which there exists a unique solution. Hence, there exists a one-to-one mapping between $(s_1,s_2,\dots,s_n)$ and $(y_1,y_2,\dots,y_n)$.
\end{proof}

Suppose that the condition in \Lref{nonsingular} is satisfied, e.g., $q > kn+1$. Then \eq{matrixform} can be written as $\bold{b}=-\bold{A}^{-1}\bold{s'}$. Note that $\bold{b}$ contains all data symbols $y_1,y_2,\dots,y_n$ as its last $n$ entries. Also, $\bold{s}'=\bold{K} \ (s_1,\hdots,s_n)^T $, where 
\[
\bold{K} =
\begin{bmatrix}
\left.
\begin{array}{l}
1\\
\vdots 
\\
1
\end{array}\right\}k

\\

&\left.\begin{array}{l}
1\\
\vdots 
\\
1
\end{array}\right\}k&&&&&&\\

\\
&&&&\ddots&&&\\
\\
&&&&&&& \begin{array}{l}

1\\
\vdots 
\\
1
\end{array}\\

\end{bmatrix}_{kn\times n}.
\] 

%%%%%%%%%%%%%%%%%%%%%%%%%%%%%%%%%%%%%%%%%%%%%%%%%%%%%%%%%%%%%%%%%%%%%%%%%%%%%%%%%%%%%%%%%%%%%%%%%%%%%%%%%%%%%%%%%%%%%%%%%%%%%%%%%%%%
\iffalse
\[
\bold{K} =
\begin{bmatrix}

1&&&&&
\\
\vdots&&&&&\\
1&&&&&\\
&1&&&&\\
&\vdots&&&&\\
&1&&&&\\
\\
&&&\ddots&&\\
\\
&&&&&1\\
&&&&&\vdots\\
&&&&&1
\end{bmatrix}_{km\times m}
\]  
\fi
%%%%%%%%%%%%%%%%%%%%%%%%%%%%%%%%%%%%%%%%%%%%%%%%%%%%%%%%%%%%%%%%%%%%%%%%%%%%%%%%%%%%%%%%%%%%%%%%%%%%%%%%%%%%%%%%%%%%%%%%%%%%%%%%%%%%
\noindent
Let $\bold{A}'$ be a submatrix of $-\boldsymbol{A}^{-1}$ consisting of the last $n$ rows of $-\boldsymbol{A} ^{-1}$. Now we can write the encoding equation for users $1,\hdots,n$ as follows:
\begin{equation}\label{encoding}
(y_1,\hdots,y_n)^T=\bold{E} \ (s_1,\hdots,s_n)^T,
\end{equation}
where $\bold{E}\ \deff \ \bold{A}'\bold{K}$ and is referred to as the \emph{seed encoding matrix}. Note that by \Cref{cor_onetoone}, there is a one-to-one mapping between $(s_1,s_2,\hdots,s_n)$ and $(y_1,y_2,\hdots,y_n)$. Hence, $\bold{E}$ is non-singular. 
%Seeing that both of $\bold{A}'$ and $\bold{K}$ have rank $m$, then $\bold{E} $  is a $m\times m $ invertible matrix. Hence the mapping between the secrets and data is one-to-one. 

In order to attain the storage overhead equal to one, each data symbol $y_j$ must be stored only once while ensuring that the correctness condition for users $1,2,\dots,n$ is satisfied. For $j \in [n]$, $y_j$ is stored at node $j$. Hence, by the choice of $A_j$'s specified in \eq{cyclicset} together with \eq{eq_SO}, user $j$ has access to all $k$ realizations of $P_j$ and by invoking Shamir's secret decoder, it can reconstruct $s_j$.

The encoding of secrets $s_j$, for $j=n+1,\dots,m$, is exactly same as in the DSSP with nearly optimal SO, discussed in Section\,\ref{subA}. In other words, the same protocol with $m-n$ users and assuming $y_1,y_2,\dots,y_n$, constructed from secrets $s_1,s_2,\dots,s_n$ as discussed above, as the initial random seed is invoked. In \Tref{optimalso} it is proved that the proposed protocol is a DSSP satisfying the weak secrecy condition, specified in \eqref{sec-con}, and hence, it is referred to as the weakly secure DSSP with optimal SO. Before that, a construction of such DSSP is demonstrated in the following example. 
\begin{exmp}
\label{optimalDSSPexmple}

 Let $m=10$ and $n=5$, as in Example \ref{nearlyoptimalDSSPexmple}. Let also $k=2$ and $q=5$, which satisfy the condition of \Lref{nonsingular}. Then $5$ subsets with access sets specified in \eqref{cyclicset} for users indexed by $1,\hdots,5$ are picked. In particular, the following access sets are considered for all the $10$ users:
\begin{align*}
&A_1 \hspace{-1mm}=\{1 ,2\},A_2\hspace{-1mm}=\hspace{-1mm}\{2 ,3\},A_3\hspace{-1mm}=\hspace{-1mm}\{3 ,4\},A_4\hspace{-1mm}=\{4 ,5\},A_5\hspace{-1mm}=\{5 ,1\},\\
&A_6\hspace{-1mm}=\{1,3\},A_7\hspace{-1mm}=\{1,4\},A_8\hspace{-1mm}=\{1,5\},A_9\hspace{-1mm}=\{2,4\},A_{10}\hspace{-1mm}=\{3,5\},
\end{align*}
where $y_6,\hdots,y_{10}$ are stored on storage nodes indexed by the largest elements of $A_6,\hdots,A_{10}$, respectively. Let $\gamma = 2$ be picked as a primitive element of $\mathbb{F}_5$ and then $\gamma_1=2, \gamma_2=4$ are the evaluation points. In the first step of the protocol, which involves encoding $s_1,\hdots,s_5$ as the random seed, the \textit{seed encoding} matrix $\bold{E}$ in \eqref{encoding} is as follows:
 
 \[
\bold{E} \ = \
\begin{bmatrix}
 1&3&4&2&1\\
 1&1&3&4&2\\
 2&1&1&3&4\\
 4&2&1&1&3\\
 3&4&2&1&1
\end{bmatrix} .
\] 
Then, $y_1,\hdots,y_5$ are computed from $s_1,\dots,s_5$ as
\begin{align*}
y_1=s_1+3s_2+4s_3+2s_4+s_5,\\
y_2=s_1+s_2+3s_3+4s_4+2s_5,\\
y_3=2s_1+ s_2+ s_3+3s_4+4s_5,\\
y_4=4s_1+2s_2+ s_3+ s_4+3s_5,\\
y_5=3s_1+4s_2+2s_3+ s_4+s_5,
\end{align*}
and are used to encode remaining secrets according to the DSSP with nearly optimal SO. The encoded data symbols together with the storage profile are shown in the following table:

\begin{table}[h!]
\centering
 \begin{tabular}{||c c c c c||} 
 \hline
 Node 1 & Node 2 & Node 3  & Node 4 &Node 5 \\ [0.5ex] 
 \hline\hline
 $y_1$ & $y_2$ & $y_3$ & $y_4$ & $y_5$ \\ 
 \hline
 $ $ & $  $ & $ 2y_1-s_6 $ & $2y_1-s_7  $ & $2y_1-s_8  $ \\
 \hline
  &  & $ $ &$2y_2-s_9 $& $2y_3-s_{10} $ \\
 \hline
\end{tabular}
\caption{ Storage Profile in Example \ref{optimalDSSPexmple}}
\label{optimalexample}
\end{table}

\end{exmp}

%It is proved in the following theorem that the proposed protocol in this section is actually a DSSP satisfying the security condition and attains the optimal SO.

\begin{theorem}\label{optimalso}
The proposed protocol in this section is a weakly secure DSSP satisfying all conditions in Definition \ref{ssd} and has the storage overhead, defined in \eqref{so}, equal to $1$.
\end{theorem}
\begin{proof}
In this protocol, each user $j$ has access to all $|A_j|$ evaluations of its associated polynomial $P_j$, which holds for both the initial $n$ users and the remaining $m-n$ users. Hence, the correctness condition is satisfied by invoking Shamir's secret decoder. Also, note that the number of data symbols generated in this protocol is $n$, equal to the number of users, and each data symbol is stored exactly once. Hence, the storage overhead is one, which is the optimal value. What remains to show is that the weak secrecy condition is also satisfied.

Since the vector of all secrets is assumed to be full entropy, then $(s_1,s_2,\hdots,s_n)$ is also full entropy. This together with \Cref{cor_onetoone} implies that $(y_1,y_2,\hdots,y_n)$ is full entropy and independent of $(s_{n+1},s_{n+2},\hdots,s_{m})$. Also, by \Lref{fulent}, the vector of data symbols $(y_{n+1},y_2,\hdots,y_m)$ is full entropy. Consequently, the vector of all data symbols in this protocol is full entropy. Then the rest of the proof is similar to the proof of \Tref{nearlyoptimalsecurity}. 

\end{proof}

\noindent
{\bf Remark\,1.} Note that we always have SO $\geq 1$ in a DSSP to satisfy the correctness condition, as stated in Section\,\ref{sec:two}. In fact, \Tref{optimalso} implies that this lower bound is achievable under a weak secrecy condition. Hence, we refer to the proposed protocol in this section as the DSSP with optimal storage overhead. 

For certain parameters $m$ and $n$, the proposed protocol in this section is also a communication-optimal DSSP. This is summarized in the next Theorem. 

\begin{theorem}
\label{optimal}
Let $m= {n\choose k}$ for some $k\leq n/2$ and the access structure $\sA$ be picked as the set of all $k$-subsets of $[n]$. Then the DSSP with optimal storage overhead is also communication-optimal. In other words, it simultaneously attains the optimal value for both the communication complexity and the storage overhead under the weak secrecy condition, specified in \eqref{sec-con}. 
\end{theorem}

\begin{proof}
 Note that the DSSP with optimal SO is a T-DSSP, as defined in Definition \ref{tdss}. This is because a user $j$ downloads exactly one data symbol from each node in $A_j$ to reconstruct $s_j$. Also, $\sA$ is the collection of all $k$-subsets of $n$ and, in particular, the ones specified in \eq{cyclicset}. Then the conditions in \Tref{sec:four_main} is satisfied, which implies that the proposed protocol is also a communication-optimal DSSP.
\end{proof}

An interesting case of \Tref{optimal} is when we want to serve the maximum possible number of users $m={n\choose \floor{\frac{n}{2}}}$ for a given $n$, as stated in Section\,\ref{sec:three}. In this case, we have a communication-optimal DSSP with optimal storage overhead that also serves the maximum possible number of users while providing weak secrecy, as specified in \eqref{sec-con}.

%We finish this section by discussing two undesirable features of proposed optimal DSSP. First one is that we need to find a prime or power of a prime number $q$ such that $ q-1 \notdivides m$ which may result in choosing a number greater than $m$ mostly. So the size of the finite field which we do our computations in it is at least as large as $m$. So by increasing the number of users, we need to work with large size fields which is a negative point from the computation and also communication point of view.

Next, we discuss the complexity of the construction and encoding as well as the latency of the encoding process. The complexity of constructing the DSSP with optimal SO is dominated by computing the inverse matrix $\bold{A}^{-1}$. The complexity of a straightforward Gaussian elimination method for computing $\bold{A}^{-1}$ is $O(k^3n^3)$. Note that this needs to be done only once, and then $\bold{A}$ can be fixed for encoding purposes. The computation complexity of encoding the first $n$ secrets is $O(n^2)$, due to the multiplication of the $n \times n$ \textit{seed encoding matrix} by the vector of secrets of the first $n$ users, as specified in \eqref{encoding}. In the second step of the encoding process, the computation of a vector inner product, as specified in \eq{eq_newdata}, is needed for each of the $m-n$ remaining users resulting in a complexity $O((m-n)\tilde{k})$, where $\tilde{k}$ is the average size of access sets of the remaining $m-n$ users. Note that one can assume the vector $\boldsymbol{a}^{t}$ is computed a priori, as part of the construction, and hence, it does not have to be computed during the encoding process. Hence, a straightforward implementation of the encoding process results in the encoding complexity $O(n^2+\tilde{k}m)$. Similarly, the computation complexity of the encoder in the DSSP with nearly optimal SO is $O(\tilde{k}m)$, where $\tilde{k}$ is the average size of access sets of all the $m$ users. Moreover, encoding of all secrets in this protocol can be done in parallel, which results in a \textit{fast} encoder implementation with latency $O(\max_{j} |A_j|) = O(n)$. However, in a DSSP with optimal SO, encoding of the secrets of the first $n$ users should be done first, followed by encoding secrets of the $m-n$ remaining users, which results in $O(n^2)$ latency. Table \ref{comparison} summarizes the complexity comparison of the two protocols proposed in this section. Although the DSSP with nearly optimal SO has a slightly higher SO than the optimal one, it has a much lower encoding latency and also slightly lower encoding complexity. Also, the DSSP with optimal SO requires a condition on the field size $q$ specified in \Lref{nonsingular}, in addition to the natural condition $q > \max_{j \in [m]} |A_j|$. Regarding the decoding complexity, note that in both of the protocols under discussion, each user $j$ utilizes Shamir's decoder in which a polynomial is interpolated to reconstruct $s_j$. Note also that the coefficients of the interpolated polynomial can be computed once and be used repeatedly for the decoding. Hence, the decoding complexity of both protocols is $O(\tilde{k}m)$. Also, users can decode their secrets in parallel resulting in $O(\tilde{k})$ decoding latency.   

\vspace{3mm}
\begin{table}
\begin{center}

\begin{tabular}{ |p{1.7cm}||p{3.3cm}|p{2.5cm}|  }
 %\hline
 %\multicolumn{4}{|c|}{Country List} \\
 \hline
 \vspace{.1mm}Protocol & \vspace{.1mm} DSSP  with optimal SO & DSSP with nearly optimal SO\\
 \hline
storage overhead \vspace{1mm}&  \vspace{0.4mm} \hspace{17mm}$1$ & \vspace{0.4mm}$\hspace{8mm}1+\frac{n}{m}$ \\
 %\vspace{1mm}
 \hline
 \vspace{0.05mm}encoding complexity\vspace{1mm} &\vspace{.1mm}\hspace{8mm}$O(n^2+\tilde{k}m)$ \vspace{.1mm}&\vspace{.1mm}$\hspace{7mm} O(\tilde{k}m)$\\
 \hline
 \vspace{0.4mm}
encoding latency  \vspace{1mm}&\vspace{.1mm}\hspace{13mm}$O(n^2)$ \vspace{0mm}& \vspace{0mm}$\hspace{8mm}O(n)$\\
 \hline
 \vspace{0.05mm}decoding complexity\vspace{1mm} &\vspace{.1mm}\hspace{13mm}$O(\tilde{k}m)$ \vspace{.1mm}&\vspace{.1mm}$\hspace{7mm} O(\tilde{k}m)$\\
 \hline
  \vspace{0.4mm}
decoding latency  \vspace{1mm}&\vspace{.1mm}\hspace{13mm}$O(\tilde{k})$ \vspace{0mm}& \vspace{0mm}$\hspace{8mm}O(\tilde{k})$\\
 \hline
\end{tabular}

\end{center}
\caption{ Comparison of DSSPs with optimal and nearly optimal SO.}
\label{comparison}
\end{table}
%%%%%%%%%%%%%%%%%%%%%%%%%%%%%%%%%%%%%%%%
\section{DSSP with Balanced Communication Complexity and Storage Profile}\label{subC}

In this section, we discuss methods to make the storage profile and the communication complexity \textit{balanced} when considering individual storage loads and communication loads across the storage nodes. 

In practice, the time required to establish access to storage nodes is one of the major factors affecting the performance in large scale distributed storage systems \cite{dean2009challenges,melnik2010dremel}. An unbalanced storage load can lead to non-uniform delays across different nodes and, ultimately, to node failures. Hence, it is highly desirable for a distributed storage system to be able to balance offered data access load across the storage nodes \cite{aktas2019load}. More specifically, assuming that the storage nodes have the same capabilities, it is not desired to have a node with a significantly larger amount of stored data comparing to another node. This becomes relevant when the actual limitations on the storage capacity of nodes is taken into account. Similarly, if a significantly larger amount of data needs to be downloaded from one node comparing to another node, that could slow down the process of reconstructing the secrets. 

Suppose that the \textit{storage profile vector} $\boldsymbol{c}^s=(c^s_1,c^s_2,\hdots,c^s_n)$, corresponding to a certain storage profile, represents the amount of data stored in the storage nodes, where $c^s_i$ is the amount of data, in terms of the number of $\Fq$-symbols, stored in node $i$. Similarly, \textit{communication complexity vector} $\boldsymbol{c}^c=(c^c_1,c^c_2,\hdots,c^c_n)$, corresponding to the collection of decoding processes $\cD$ defined in Definition \ref{ssd}, represents the total amount of data downloaded from the storage nodes to reconstruct all secrets, where $c^c_i$ is the total amount of downloaded data from the node $i$. %that shows  total number of downloads from each node. $\boldsymbol{c}^c$ only depends on  $\sA$  but $\boldsymbol{c}^s$ also depends on how $\bold{Z}$ is constructed. 
 \begin{definition}\label{balanced-def}
 
 We say that a vector is \textit{balanced} if all entries are equal. Furthermore, we define the \textit{bias} of a vector to be the maximum difference between the entries of the vector, i.e., the difference between the maximum and the minimum entry. In the context of access sets, we say that a collection $\cF$ of subsets of $[n]$ is a \textit{balanced collection} if each $i \in [n]$ belongs to the same number of subsets in $\cF$. 
 
 \end{definition}

 For instance, the strategy used in Example \ref{optimalDSSPexmple} results in the storage profile vector $(1,1,2,3,3)$, as shown in Table \ref{optimalexample}, with a bias of $2$. Ideally, one would want a balanced storage profile vector $(2,2,2,2,2)$, which we later show is actually possible. In general, the goal here is to specify the access structure $\sA$ and the storing matrix $\boldsymbol{Z}$ in order to reduce the bias of storage profile vector and communication complexity vector of DSSPs proposed in Section\,\ref{sec:five}. This, roughly speaking, results in protocols that are less \textit{biased} and more \textit{balanced}. 

The following lemma is a key to enable constructing such DSSPs with balanced properties. 
 \begin{lemma}
 \label{weakpartition}
 The set of all $k$-subsets of $[n]$ can be partitioned into \textit{balanced collections} of size at most $n$. 
 \end{lemma}
 
 \begin{proof}
 Let $\script{K}$ denote the set of all $k$-subsets of $[n]$. For any $A=\{i_1,\hdots,i_k\} \in \script{K}$, the operator $\phi: \script{K} \rightarrow \script{K}$ is defined as follows: 
\be{phiop}
\phi (A)\ \deff \ \{i_1+1,\hdots,i_k+1\},
\ee
with $n+1=1$. Note that 
\be{identity}
\phi^n(A)=A.
\ee
 A relation $\sim$ over $\script{K}$ is defined as follows. We say $A  \sim  B$ if 
 $
 \phi^l(A)=B,
 $ for some integer $l$. It can be observed that $\sim$ is an equivalence relation. Consequently, it partitions $[n]$ into equivalence classes. We show that these equivalence classes are balanced. Note that \eqref{identity} implies that each equivalence class contains at most $n$ distinct elements of $\script{K}$. In fact, it can be shown that an equivalence class $\cF$ has $d$ elements, for some $d$ that divides $n$. If $d = n$, then each $i \in [n]$ belongs to exactly $k$ subsets in $\cF$. Otherwise, for any $A \in \cF$ and $i\in A$, we have
  $$
  \left\{i,i+d,\dots,i+d(\frac{n}{d}-1)\right\} \subseteq A.
  $$
  Hence, $i$ and $i+ld$, for any $i \in [d]$ and $l \in [\frac{n}{d}-1]$, belong to the same number of subsets in $\cF$. Then one can consider all elements of $[n]$ modulo $d$, which reduces $\cF$ to an equivalence class of $\frac{kd}{n}$-subsets of $[d]$ with size $d$. Hence, each $i \in [d]$ belongs to the same number of subsets in $\cF$, which completes the proof.
 \end{proof}

An example of the partitioning discussed in the proof of \Lref{weakpartition} is as follows.

\begin{exmp}\label{partitionexample}
Let $n=6$ and $k=3$. Then, all $3$-subsets of $[6]$ are partitioned into $3$ classes of size $6$ and one class of size $2$, which are all \textit{balanced collections}.
\begin{align*}
&[\hspace{-.5mm}\{1,2,3\}\hspace{-.5mm}]\hspace{-1mm}=\hspace{-1mm}\{ \hspace{-.5mm}\{1,2,3\},\hspace{-1mm}\{2,3,4\},\hspace{-1mm}\{3,4,5\},\hspace{-1mm}\{4,5,6\},\hspace{-1mm}\{5,6,1\},\hspace{-1mm}\{6,1,2\}\hspace{-.5mm}\}\\
&[\hspace{-.5mm}\{1,3,4\}\hspace{-.5mm}]\hspace{-1mm}=\hspace{-1mm}\{ \hspace{-.5mm}\{1,3,4\},\hspace{-1mm}\{2,4,5\},\hspace{-1mm}\{3,5,6\},\hspace{-1mm}\{4,6,1\},\hspace{-1mm}\{5,1,2\},\hspace{-1mm}\{6,2,3\}\hspace{-.5mm}\}\\
&[\hspace{-.5mm}\{1,3,6\}\hspace{-.5mm}]\hspace{-1mm}=\hspace{-1mm}\{ \hspace{-.5mm}\{1,3,6\},\hspace{-1mm}\{2,4,1\},\hspace{-1mm}\{3,5,2\},\hspace{-1mm}\{4,6,3\},\hspace{-1mm}\{5,1,4\},\hspace{-1mm}\{6,2,5\}\hspace{-.5mm}\}\\
&[\hspace{-.5mm}\{1,3,5\}\hspace{-.5mm}]\hspace{-1mm}=\hspace{-1mm}\{ \hspace{-.5mm}\{1,3,5\},\hspace{-1mm}\{2,4,6\}\hspace{-.5mm}\}\\
\end{align*}
\end{exmp}

The partitioning suggested in \Lref{weakpartition} enables constructing DSSPs with optimal storage overhead while having an \textit{almost} balanced communication complexity and storage profile for any $n$ and $m$. However, to this end, we pick the access sets according to a certain process. Same as in Section\,\ref{sec:five}, let $k$ denote the smallest integer with $m \leq { n\choose k} $. Then the process for assigning the access set $A_j$, as a $k$-subsets of $[n]$, for each user $j$ and the storage node in $A_j$ that stores the data symbol $y_j$, generated for each user $j$, is described in Algorithm 1 below.

\begin{algorithm}[H]
\SetAlgoLined
\KwResult{DSSP has balanced $\boldsymbol{c}^c$ and $\boldsymbol{c}^s$}
 \;
  $\script{A}=\emptyset,i=1, j=1, A=\{1,2,\hdots,k\}$, $\boldsymbol{Z}=[0]_{n \times m}$

  \While{$j \leq m$}{
   
   \While{$i\notin A $}{$A=\phi(A)$}

 \While{$A \notin \script{A} $}{
 $A_j=A$, \;
 $\sA=\sA \cup \{A\},$\;
 $z_{i,j}=1$.\;
 
 $A=\phi(A)$,\; $j=j+1$.\;
 
   \eIf{$i < n$}{
   $i=i+1$\;
   
   }{
   $i=1$\;
  }
 }
 $A=$ Any $k$-subset which is not in $\script{A}$ .
 }
\caption{Constructing $\script{A}$ and $\boldsymbol{Z}$ for the DSSP with optimal SO proposed in Section\,\ref{optimaldssp} }
\label{balancedDSSP}
\end{algorithm}

In particular, in Algorithm 1, access sets are picked one-by-one for users $j=1,2,\dots,m$ from equivalence classes by repeatedly applying the operation $\phi$, defined in \eq{phiop}. Through this process, the index of the storing node for each user $j$, the node that stores the one data symbol $y_j$, is also increased one by one. Once all subsets in an equivalence class are picked, the next access set is picked from another equivalence class in such a way that the index of next storing node is also increased by $1$.

Algorithm \ref{balancedDSSP} with some straightforward modifications works also for the DSSP with nearly optimal SO, proposed in Section\,\ref{subA}. The following theorem summarizes  results of Section\,\ref{subC}.  

\begin{theorem}
For any $m$ and $n$, with $k$ being the smallest integer with $m \leq { n\choose k}$, the DSSP with optimal SO together with Algorithm \ref{balancedDSSP}, to construct the access structure $\sA$ and the storing matrix $\boldsymbol{Z}$, has a storage profile vector with bias at most $1$ and a communication complexity vector with bias at most $k$.  

\end{theorem}
\begin{proof}
Suppose that $m = nl +r$, where $0\leq r < n$. Then at the end of Algorithm 1, storage nodes $1,2,\dots,r$ store exactly $l+1$ symbols and storage nodes $r+1,\dots,n$ store exactly $l$ symbols. This is because the index of storing node $i$, that stores data symbol $y_j$ for user $j$, is increased by $1$, modulo $n$, as $j$ increases by $1$ through Algorithm 1.

Now consider an equivalence class $\cF$ of size $d$, as constructed in the proof of \Lref{weakpartition}, and suppose the subsets in $\cF$ are assigned as access sets to users $j+1,\dots,j+d$. Since $\cF$ is a balanced collection, as defined in Definition\,\ref{balanced-def}, and since each user downloads exactly one data symbol from the nodes in its access set, the number of data symbols downloaded by users $j+1,\dots,j+d$ from each of the storage nodes is equal to $\frac{kd}{n}$. At the end of Algorithm 1, the communication complexity vector may not be balanced. This occurs only when the last equivalence class is partially used to assign access sets to users. In that case, the bias of the communication complexity vector is at most $\frac{kd}{n} \leq k$, where $d$ is the size of the last class being used. This completes the proof. 
\end{proof}

Note that in the special case with $m = { n\choose k}$, we have a communication-optimal DSSP with optimal storgae overhead, by \Tref{optimal}, together with balanced communication complexity vector. It also has balanced storage profile vector provided that ${ n\choose k}$ is divisible by $n$, e.g., $n$ being a prime number. As an example, we modify the constructed DSSP in example \ref{optimalDSSPexmple} by using Algorithm \ref{balancedDSSP} to construct $\sA$ and $\boldsymbol{Z}$. 

\begin{exmp}\label{balancedexample}
By modifying the access structure and the storing matrix of the DSSP in example \ref{optimalDSSPexmple} we have
\begin{align*}
&A_1 \hspace{-1mm}=\{1 ,2\},A_2\hspace{-1mm}=\{2 ,3\},A_3\hspace{-1mm}=\{3 ,4\},A_4=\{4 ,5\},A_5\hspace{-1mm}=\{5 ,1\},\\
&A_6\hspace{-1mm}=\{1,3\},A_7\hspace{-1mm}=\{2,4\},A_8\hspace{-1mm}=\{3,5\},A_9\hspace{-1mm}=\{4,1\},A_{10}\hspace{-1mm}=\{5,2\},
\end{align*}
where, by slight abuse of terminology, the first element of $A_j$, in the order written above, stores the data symbol $y_j$. The resulting encoded secrets together with the storage profile is shown in Table \ref{balancedexample}.

\begin{table}[h!]
\centering
 \begin{tabular}{||c c c c c||} 
 \hline
 Node 1 & Node 2 & Node 3 & Node 4 &Node 5 \\ [0.5ex] 
 \hline\hline
 $y_1$ & $y_2$ & $y_3$ & $y_4$ & $y_5$ \\ 
 \hline
 $  2y_3-s_6   $ & $ 2y_4-s_7$ & $2y_5-s_8  $ & $2y_1-s_9  $ &$2y_2-s_{10}$ \\
 \hline
\end{tabular}
\caption{ Storage Profile in Example \ref{balancedexample}}
\label{balancedexample}
\end{table}

\end{exmp}

\section{Conclusion and Future Work}
\label{sec:six}

In this paper, we considered a distributed secret sharing system consisting of a dealer, $n$ storage nodes, and $m$ users. The dealer aims at securely sharing a specific secret $s_j$ with user $j$ via storage nodes, in such a way that no user gets any information about other users' secrets. Given a certain number of storage nodes we find the maximum number of users that can be served in such a system. Also, lower bounds on minimum communication complexity and storage overhead are characterized for any $n$ and $m$. Then we propose distributed secret sharing protocols, under certain conditions on the system parameters, that attain these lower bounds, thereby providing schemes that are optimal in terms of both the communication complexity and storage overhead. Also, the proposed protocols are modified to have balanced communication and storage across the storage nodes.

There are several directions for future work. In this paper, the problems of designing access structure, i.e., which nodes each user has access to, and the coding problem, i.e., how to encode and decode secrets, are considered jointly. In fact, the choices of access structures are not completely arbitrary in our proposed protocols. For different protocols, we discussed sufficient conditions on the access structure. An interesting direction for future work is to study these two problems separately and consider designing efficient coding schemes given a specific access structure. In particular, an interesting problem is the following: what is the necessary and sufficient condition on the access structure that ensures existence of distributed secret sharing protocols with optimal storage overhead and/or minimum communication complexity?

A comparison between the protocol proposed in Section\,\ref{sec:four} providing the perfect secrecy and the one with optimal SO in Section\,\ref{sec:five} satisfying the weak secrecy condition suggests that a fundamental trade-off
%may 
exists between the storage overhead and the level of security that a DSSP can offer. In fact, a trade-off between SO and the security level has been recently characterized for the single user case in \cite{chou2020distributed}. In the multi-user case studied in this paper, the two protocols in Section\,\ref{sec:four} and Section\,\ref{sec:five} can be viewed as schemes attaining two extreme points of such a trade-off, one with optimal storage overhead and the other one with the perfect secrecy guarantee. More specifically, a threshold-type secrecy condition can be defined, where the threshold in weak secrecy condition is $1$ and in perfect secrecy condition is $n-1$, the size of $\bs_{-j}$ in \eqref{per-sec-con}. A precise characterization of points that lie between these two points, with both the threshold security parameter and the storage overhead between those of the two extreme points, is an interesting problem and is left for future work.

\section*{Acknowledgment}
We would like to thank Soheil Mohajer for very helpful discussions.

\section*{Appendix A}
\label{appendix}
In this section the solution to continuous optimization problem defined in \eq{minofc}-\eq{spernerc} is determined by satisfying KKT conditions. The Lagrangian can be written as:

\begin{align*}%\label{lagrangian}
J=&\sum_{k=1}^{\floor{n/2}} k \alpha_k-\lambda_1 (\sum_{k=1}^{\floor{n/2}}\alpha_k-m)
\\
&-\lambda_2 (1-\sum_{k=1}^{\floor{n/2}}  \frac{\alpha_k}{{{n}\choose{k}}} )- \sum_{k=1}^{\floor{\frac{n}{2}}} \mu_k \alpha_k,
\end{align*}
where $ \lambda_1$,$\lambda_2$ and $\mu_k$'s are Lagrange multipliers. Also KKT conditions are:
\begin{align}
 \forall k: \quad &  k-\lambda_1 + \frac{\lambda_2} {{{n}\choose{k}}} -\mu_k=0\label{kkt1}
 \\
 \forall k:\quad &  \mu_k\geq 0 \label{kkt2}
 \\
 \forall k:\quad & \alpha_k^*\geq 0 \label{kkt3}
 \\
 &\lambda_2 \geq 0  \label{kkt4}
 \\
  \forall k:\quad & \mu_k \alpha_k^*=0\label{kkt5}
 \\
  & \lambda_2 (1-\sum_{k=1}^{\floor{\frac{n}{2}}} \frac{\alpha_k^*}{{{n}\choose{k}}})=0\label{kkt6}
 \\
  & \sum_{k=1}^{\floor{n/2}} \alpha_k^*=m,\label{kkt7} 
\\
 &\sum_{k=1}^{\floor{n/2}}  \frac{\alpha_k^*}{{{n}\choose{k}}} \leq 1.\label{kkt8}
\end{align}
Since both the objective function and inequality constraints are convex and equality condition is an affine function, KKT conditions are sufficient to ensure that the solution is the global minimum. The key point that makes it possible to derive the solution of \eq{kkt1}-\eq{kkt8} is the convexity of the discrete function $f{(k)}\ \deff \ \frac{1}{{n\choose k}}$, assuming $n$ is fixed. Define $m_{(k_1,k_2)}\ \deff \ \frac{f{(k_2)}-f{(k_1)}}{k_2-k_1}$, the slope of the line connecting $(k_1,f{(k_1)})$ and $(k_2,f{(k_2)})$. Also let $d_k \ \deff \ m_{(k,k+1)}$ denote one step increment at point $(k,f_{(k)})$. It is proved in the following lemma that $d_k$ is strictly increasing with $k$.

\begin{lemma}\label{convex}
  $d_k<d_{k+1}$ for all $0\leq k \leq n-2$.
\end{lemma}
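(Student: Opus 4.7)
The plan is to recast the claim as a discrete strict convexity statement for $f$ and then dispatch it by a short algebraic computation. Since
\[
d_{k+1} - d_k = \bigl(f(k+2) - f(k+1)\bigr) - \bigl(f(k+1) - f(k)\bigr) = f(k+2) - 2f(k+1) + f(k),
\]
the inequality $d_k < d_{k+1}$ is equivalent to establishing $f(k) + f(k+2) > 2f(k+1)$ for every $0 \leq k \leq n-2$.

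I would substitute the closed form $f(\ell) = \ell!\,(n-\ell)!/n!$ and pull out the positive common factor $k!\,(n-k-2)!/n!$. This reduces the desired inequality to the polynomial inequality
\[
(n-k)(n-k-1) + (k+2)(k+1) - 2(k+1)(n-k-1) > 0.
\]
Setting $u = k+1$ to streamline bookkeeping, the left-hand side expands as
\[
(n-u)^2 + (n-u) + u^2 + u - 2u(n-u) = (n-2u)^2 + n,
\]
which is strictly positive for every $n \geq 1$ regardless of the value of $u$ (the $(n-2u)^2$ term can vanish, but the $+n$ remainder guarantees strict positivity). This completes the argument.

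I do not anticipate any genuine obstacle; the only point demanding care is the extraction of the common factorial factor, so that the sign of the inequality is preserved when clearing denominators, but since all factorials involved are positive this is routine. As a sanity check, the resulting expression $(n-2u)^2 + n$ is minimized near $k = \lfloor n/2 \rfloor - 1$, which is precisely the regime where $1/{n\choose k}$ is smallest and the convexity is tightest, consistent with intuition. An alternative route would be to invoke convexity of the continuous interpolant $x \mapsto 1/{n\choose x}$ and specialize to integers, but the direct discrete computation above is cleaner and yields an explicit positive formula for the second difference.
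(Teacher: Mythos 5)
Your proof is correct, and it takes a genuinely different route from the paper's. The paper first computes the closed form $d_k = \frac{k!\,(n-k-1)!\,(2k-n+1)}{n!} = \frac{2k-n+1}{n\binom{n-1}{k}}$ and then splits into two cases according to the sign of $2k-n+1$ (i.e., whether $k$ lies below or above $\frac{n-1}{2}$), multiplying a pair of same-sign inequalities in each case to conclude $d_k < d_{k+1}$. You instead work directly with the second difference $d_{k+1}-d_k = f(k+2)-2f(k+1)+f(k)$, factor out the positive quantity $\frac{k!\,(n-k-2)!}{n!}$, and reduce everything to the identity
\[
(n-u)^2+(n-u)+u^2+u-2u(n-u) \;=\; (n-2u)^2+n \;>\;0,
\qquad u=k+1,
\]
which I have checked is algebraically correct (e.g., for $n=2,k=0$ it gives $d_1-d_0=1$, matching the direct computation). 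Your approach buys a single uniform argument with no case analysis and an explicit, manifestly positive closed form for the second difference, namely $d_{k+1}-d_k=\frac{k!\,(n-k-2)!}{n!}\bigl[(n-2k-2)^2+n\bigr]$; the paper's approach, at the cost of the case split, additionally exhibits the sign of $d_k$ itself (negative below the midpoint, nonnegative above), which is not needed for the lemma but is consistent with the shape of $1/\binom{n}{k}$. Both are complete proofs of the statement.
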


\begin{proof}
\begin{align*}
&d_k =\frac{1}{{n\choose k+1}} - \frac{1}{{n\choose k }} =\frac{k!(n-k-1)!(2k-n+1)}{n!}
\\
&\qquad \qquad \qquad \qquad =\frac{(2k-n+1)}{n {{n-1}\choose{k}}}.
\end{align*}
Case $1$: $0 \leq k < \frac{n-1}{2} $. In this case, we have
\begin{align*}
&0\leq(n-1-2(k+1)) < (n-1-2k),
\\
&\quad \qquad 0<\frac{1}{{{n-1}\choose{k+1}}} \leq \frac{1}{{{n-1}\choose{k}}}.
\end{align*}
Multiplying inequalities yields
\begin{align*}
&0\leq \frac{(n-1-2(k+1)) }{{{n-1}\choose{k+1}}} < \frac{(n-1-2k)}{{{n-1}\choose{k}}},
\end{align*}
which implies that
$$
d_k < \ d_{k+1} \leq 0.
$$
Case $2$: $ \frac{n-1}{2}\leq k \leq n-2$. In this case, we have
\begin{align*}
&0\leq (2k-n+1)< (2(k+1)-n+1),
\\
&\quad \qquad 0<\frac{1}{{{n-1}\choose{k}}} < \frac{1}{{{n-1}\choose{k+1}}}.
\end{align*}
Again, multiplying inequalities yields
\begin{align*}
&0\leq  \frac{(2k-n+1)}{{{n-1}\choose{k}}}  <  \frac{(2(k+1)-n+1)}{{{n-1}\choose{k+1}}},
\end{align*}
which implies that
$$
0 \leq d_k < \ d_{k+1}.
$$
This completes the proof of lemma.
\end{proof}

Using \Lref{convex}, an inequality is proved in the following lemma which is used in the proof of \Tref{atmost2}.

\begin{lemma}\label{m-m}
For $k_1,k_2,k_3 \in \N$, with $0\leq k_1 < k_2 < k_3 \leq  {n\choose \floor{\frac{n}{2}}}$, we have:
\begin{equation*}
m_{(k_1,k_2)}<m_{(k_2,k_3)}.
\end{equation*}
\end{lemma}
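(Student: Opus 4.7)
The plan is to exploit the telescoping identity $f(k_2)-f(k_1)=\sum_{k=k_1}^{k_2-1}d_k$ together with the strict monotonicity of $d_k$ already established in \Lref{convex}. In other words, I will realize each chord slope $m_{(\cdot,\cdot)}$ as an \emph{arithmetic average} of consecutive one-step increments $d_k$, and then compare the two averages termwise.

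More concretely, I would first write
\begin{equation*}
m_{(k_1,k_2)} \ = \ \frac{f(k_2)-f(k_1)}{k_2-k_1} \ = \ \frac{1}{k_2-k_1}\sum_{k=k_1}^{k_2-1} d_k,
\end{equation*}
so that $m_{(k_1,k_2)}$ is the average of the $d_k$'s over the index set $\{k_1,k_1+1,\dots,k_2-1\}$. The identical manipulation gives
\begin{equation*}
m_{(k_2,k_3)} \ = \ \frac{1}{k_3-k_2}\sum_{k=k_2}^{k_3-1} d_k,
\end{equation*}
i.e., the average of the $d_k$'s over $\{k_2,k_2+1,\dots,k_3-1\}$. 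The crucial point is that these two index sets are disjoint, and every index in the second set is strictly larger than every index in the first. By \Lref{convex}, $d_k$ is strictly increasing in $k$, so in particular every increment $d_k$ with $k\ge k_2$ is strictly greater than every increment $d_k$ with $k\le k_2-1$; in fact $d_{k_2-1}<d_{k_2}$ already suffices.

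Once that termwise strict domination is in hand, the conclusion $m_{(k_1,k_2)}<m_{(k_2,k_3)}$ follows immediately from the elementary fact that if every element of one finite set is strictly smaller than every element of another finite set, then the arithmetic mean of the first is strictly smaller than the arithmetic mean of the second. There is essentially no obstacle here; the only thing one must be slightly careful about is to verify the range condition of \Lref{convex} (namely $0\le k\le n-2$) applies to all indices $k\in\{k_1,\dots,k_3-1\}$ appearing in the two sums, which is ensured by the hypothesis $0\le k_1<k_2<k_3$ together with the implicit requirement $k_3\le n$ (since $f(k)$ is only defined for $0\le k\le n$; the bound stated in terms of $\binom{n}{\lfloor n/2\rfloor}$ in the hypothesis can be harmlessly tightened to $k_3\le n$ for the purposes of applying \Lref{convex}). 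This completes the plan.
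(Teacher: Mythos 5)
Your proposal is correct and follows essentially the same route as the paper: both express each chord slope $m_{(\cdot,\cdot)}$ as an average of consecutive increments $d_k$ via the telescoping identity and then invoke the strict monotonicity of $d_k$ from \Lref{convex} (the paper just phrases the termwise comparison by pivoting both averages against $d_{k_2}$). Your side remark about the range condition on the indices is a fair observation about the (apparently mistyped) upper bound in the lemma's hypothesis, but it does not change the argument.
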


\begin{proof}
 For $k,k'\in \N$ with $k <k'$,  
 $$
 m_{(k,k')}=\frac{1}{k'-k} \sum_{i=k}^{k'-1} d_i.
 $$
Using this together with \Lref{convex} we have
$$
m_{(k_1,k_2)}=\frac{1}{k_2-k_1} \sum_{i=k_1}^{k_2-1} d_i<\frac{1}{k_2-k_1} \sum_{i=k_1}^{k_2-1} d_{k_2}=d_{k_2},
$$
and
$$
m_{(k_2,k_3)}=\frac{1}{k_3-k_2} \sum_{i=k_2}^{k_3-1} d_{i} \geq \frac{1}{k_3-k_2} \sum_{i=k_2}^{k_3-1} d_{k_2} = d_{k_2},
$$
which conclude the lemma.
\end{proof}

%%%%%%%%% previous version proof comment
%%%%%%%%%%%%%%%%%%%%%%%%%%%%%%%%%%%%%%%%%%%%%%%%%%%%%%%%%%%%%%%%%%%%%%%%%%%%%%
%In the next theorem, it is shown that the solution to \eq{kkt1}-\eq{kkt8} has at most two non-zero $\alpha^*_k$'s and when two non-zero $\alpha^*_k$'s exist, their indices are consecutive. 
%We are almost ready to determine the solution to \eq{kkt1}-\eq{kkt6}.
%Now, by using this result we prove that the solution of  $\alpha^*_k$'s can not have more than two nonzero $\alpha_{k}^*$'s.

%Let $\alpha_{k}^{*},\mu_{k}^{*},\lambda_1^*,\lambda_2^*$ denote the solution of as \eq{kkt1}-\eq{kkt3}. We prove that at most two of $\alpha_{k}^*$'s can be nonzero in the following.

\begin{theorem}\label{atmost2}
%If two non-zero $\alpha_k^*$ exist, their indices are consecutive.

In the solution to the optimization problem, at most two of $\alpha_k^*$'s are non-zero. Furthermore, if two of them are non-zero, then their indices are consecutive. 

\end{theorem}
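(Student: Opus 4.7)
The plan is to exploit the KKT conditions \eqref{kkt1}--\eqref{kkt8} together with the strict convexity of $f(k) \deff 1/\binom{n}{k}$ established in Lemmas \ref{convex} and \ref{m-m}. The starting observation is that for every active index, i.e., every $k$ with $\alpha_k^* > 0$, complementary slackness \eqref{kkt5} forces $\mu_k = 0$, and then the stationarity equation \eqref{kkt1} rewrites as $\lambda_2 f(k) + k = \lambda_1$. Geometrically this says that the points $(k, f(k))$ corresponding to the active indices all lie on a single affine line of slope $-1/\lambda_2$ in the plane (when $\lambda_2 = 0$, the equation reduces to $k = \lambda_1$, so at most one index is active and both claims hold trivially).

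For the first claim, I would argue by contradiction: suppose three distinct indices $k_1 < k_2 < k_3$ in $\{1,\dots,\floor{n/2}\}$ are all active. Collinearity of the three points $(k_i, f(k_i))$ forces $m_{(k_1,k_2)} = m_{(k_2,k_3)}$, which directly contradicts \Lref{m-m}. Hence at most two of the $\alpha_k^*$'s can be nonzero.

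For the consecutivity claim, assume the two active indices are $k_1 < k_2$ with $k_2 - k_1 \geq 2$ and pick any intermediate integer $k'$ with $k_1 < k' < k_2$. Subtracting the two stationarity equations gives $\lambda_2 = (k_2-k_1)/(f(k_1)-f(k_2)) = -1/m_{(k_1,k_2)}$, which is strictly positive because $f$ is strictly decreasing on $[1,\floor{n/2}]$. A direct substitution in \eqref{kkt1} then yields
\[
\mu_{k'} \;=\; (k'-k_1)\bigl(1 + \lambda_2\, m_{(k_1,k')}\bigr).
\]
Writing $m_{(k_1,k_2)}$ as the weighted average $\frac{(k'-k_1)\,m_{(k_1,k')} + (k_2-k')\,m_{(k',k_2)}}{k_2-k_1}$ and combining with \Lref{m-m} gives $m_{(k_1,k')} < m_{(k_1,k_2)} < 0$. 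Multiplying by $\lambda_2 > 0$ yields $\lambda_2\, m_{(k_1,k')} < \lambda_2\, m_{(k_1,k_2)} = -1$, so $\mu_{k'} < 0$, contradicting the dual feasibility condition \eqref{kkt2}. Therefore the two active indices must be consecutive.

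The only delicate point is the inequality $m_{(k_1,k')} < m_{(k_1,k_2)}$, which is not exactly the statement of \Lref{m-m} but follows from it at once through the weighted-average identity above; the rest is algebraic manipulation of the KKT identities. The sign tracking (that $\lambda_2 > 0$ because $f$ is decreasing on the relevant range, and that the slopes are negative there) is the only place where one must be slightly careful to keep the inequalities pointing in the right direction.
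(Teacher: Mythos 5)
Your proof is correct and follows essentially the same route as the paper: complementary slackness forces $\mu_k=0$ at active indices, the stationarity equations pin down $\lambda_1,\lambda_2$, and \Lref{m-m} then forces $\mu_{k'}<0$ at an intermediate index, contradicting dual feasibility. The paper handles both claims in a single contradiction (two non-consecutive active indices), whereas you split them into a collinearity argument and a weighted-average computation, but the underlying mechanism is identical and your sign-tracking (in particular $\lambda_2=-1/m_{(k_1,k_2)}>0$) is sound.
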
 
%of  KKT conditions to have two nonzero $\alpha_k^*$'s which their indexes are not consecutive integer numbers.
%There exists at most two non-zero $\alpha_k^*$'s in the solution of
%\eq{kkt1}-\eq{kkt6}. Also their indexes are 
%The solution of  \eq{kkt1}-\eq{kkt6}  is to have at most two nonzero $\alpha_k^*$'s which also have consecutive indexes. 
\begin{proof}\label{atmost2proof}
Assume to the contrary there exist two non-consecutive integers $k_1$ and $k_3$ such that $\alpha_{k_1}^*,\alpha_{k_3}^*\neq0$. Let $k_1<k_3$, without loss of generality. One can find $k_2 \in \N$ such that $k_1<k_2<k_3$. By \eq{kkt5} $\mu_{k_1}$ and $\mu_{k_3}$ must be zero Also, by \eq{kkt1} we can write:
\begin{align*}
k_1-\lambda_1+\lambda_2 \frac{1}{{n\choose k_1}}=0,
\\
k_3-\lambda_1+\lambda_2 \frac{1}{{n\choose k_3}}=0.
\end{align*}
Solving this for $\lambda_1$ and $\lambda_2$ results in
\be{lambda1}
\lambda_1=\frac{\frac{k_3}{{n\choose k_1}}-\frac{k_1}{{n\choose k_3}}}{\frac{1}{{n\choose k_1}}-\frac{1}{{n\choose k_3}}},
\ee

\be{lambda2}
\lambda_2=\frac{k_3- k_1}{\frac{1}{{n\choose k_1}}-\frac{1}{{n\choose k_3}}}.
\ee
By substituting $\lambda_1$ and $\lambda_2$ from \eqref{lambda1} and \eqref{lambda2}, respectively, into (\ref{kkt1}) for $k_2$, 
 $\mu_{k_2}$ is derived as follows: 
\begin{align*}
\mu_{k_2}&=k_2-\lambda_1+\lambda_2 \frac{1}{{n\choose k_2}}=\frac{k_2-k_3}{{n\choose k_1}}+\frac{k_3-k_1}{{n\choose k_2}}+\frac{k_1-k_2}{{n\choose _3}}
\\
%&=\frac{k_2-k_3}{{n\choose k_1}}+\frac{(k_3-k_2)+(k_2-k_1)}{{n\choose k_2}}+\frac{k_1-k_2}{{n\choose _3}}
%\\
&=(k_3-k_2)(\frac{1}{{n\choose k_2}}-\frac{1}{{n\choose k_1}})+(k_2-k_1)(\frac{1}{{n\choose k_2}}-\frac{1}{{n\choose k_3}})
\\
&=(k_3-k_2)(k_2-k_1)(m_{(k_1,k_2)}-m_{(k_2,k_3)}) < 0,
\end{align*}
where the last inequality follows by the assumption on $k_1,k_2,k_3$ and \Lref{m-m}. This contradicts (\ref{kkt2}) which completes the proof.
\end{proof}  
\Tref{atmost2} implies that $\alpha_i^*$,$\alpha_{i+1}^*\geq 0$ for some $i$ and $\alpha_k^*=0$ for $k \neq i,i+1$. Next, $i$, $\alpha_i^*$, and $\alpha_{i+1}^*$ are derived. Note that $\lambda_2 >0$ by \eqref{lambda2} and hence, (\ref{kkt6}) implies that the inequality condition in \eq{kkt8} turns into equality, i.e.,
\be{active}
\frac{\alpha_i^*}{{n\choose i}}+\frac{\alpha_{i+1}^*}{{n\choose i+1}}=1.
\ee
Furthermore, \eqref{kkt7} implies that
\be{totalusers}
\alpha_i^*+\alpha_{i+1}^*=m.
\ee
Therefore, ${\alpha_i^*}$ and $\alpha_{i+1}^*$ can be derived by combining \eqref{active} and \eqref{totalusers} as follows:

\begin{align}\label{alphai}
&\alpha_i^* =\frac{{n\choose i+1} -m}{{n\choose i+1}-{n\choose i}} {n\choose i},
\\
&\alpha_{i+1}^* =\frac{ m-{n\choose i}}{{n\choose i+1}-{n\choose i}} {n\choose i+1}\label{alphai1}.
\end{align}
Note that $\alpha_{i}^*$ and $\alpha_{i+1}^*$ must be non-negative by (\ref{kkt3}). Therefore, $i$ is the largest integer such that
$$
i\leq {n\choose k}.
$$
Also, the minimum of the objective function $\psi$ is given by
\be{min_obj}
\psi^* = i \alpha_i ^*+ (i+1) \alpha_{i+1}^*.
\ee

\section*{Appendix B}
It is shown in the following lemma that the lower bound on $SO$, defined in \eqref{so}, is not tight under the perfect secrecy condition.
\begin{lemma}
\label{lem_so}
If $m>n$, then the storage overhead of a DSSP is strictly greater than $1$ under the perfect secrecy condition.
\end{lemma}
\begin{proof}
It is shown in Section\,\ref{sec:two} that $SO\geq 1$. Hence, it suffices to show that $SO \neq 1$. Assume to the contrary that there exists a perfectly secure DSSP with $SO=1$, i.e., the length of $\by$ is equal to that of $\bs$. This together with noting that  $H(\bs|\by)=0$ by the correctness condition, specified in Definition \ref{ssd}, implies that there is a one-to-one mapping between $\bs$ and $\by$. Furthermore, since $m>n$, there exists a user $j$, for some $j \in [m]$, that has access to at least two data symbols, i.e., the length of $\by_j$ is at least $2$. Then we have
\begin{align}
\label{H}
    H(\bs_{-j}|\by_j)\stackrel{ \text{(a)}}{=}H(\bs|\by_j)\stackrel{ \text{(b)}}{=}H(\by|\by_j)\stackrel{ \text{(c)}}{\leq} (m-2)\log q,
\end{align}
where (a) holds since $s_j$ is a function of $\by_j$, (b) is by noting that there is a one-to-one mapping between $\by$ and $\bs$, and (c) holds since the length of $\by_j$ is at least $2$. 
Moreover, the perfect secrecy condition, specified in \eqref{per-sec-con}, implies that 
\begin{align}
\label{H-perfect}
   H(\bs_{-j}|\by_j)=H(\bs_{-j})=(m-1)\log q,
\end{align}
which holds because the secrets are independent and uniformly distributed. Comparing \eqref{H} with \eqref{H-perfect} shows that the perfect secrecy condition is violated. This contradiction shows that the storage overhead of a perfectly secure DSSP is strictly greater than $1$ for $m>n$.
\end{proof}

\bibliographystyle{IEEEtran}
\bibliography{main}

% Generated by IEEEtran.bst, version: 1.14 (2015/08/26)
\begin{thebibliography}{10}
\providecommand{\url}[1]{#1}
\csname url@samestyle\endcsname
\providecommand{\newblock}{\relax}
\providecommand{\bibinfo}[2]{#2}
\providecommand{\BIBentrySTDinterwordspacing}{\spaceskip=0pt\relax}
\providecommand{\BIBentryALTinterwordstretchfactor}{4}
\providecommand{\BIBentryALTinterwordspacing}{\spaceskip=\fontdimen2\font plus
\BIBentryALTinterwordstretchfactor\fontdimen3\font minus
  \fontdimen4\font\relax}
\providecommand{\BIBforeignlanguage}[2]{{%
\expandafter\ifx\csname l@#1\endcsname\relax
\typeout{** WARNING: IEEEtran.bst: No hyphenation pattern has been}%
\typeout{** loaded for the language `#1'. Using the pattern for}%
\typeout{** the default language instead.}%
\else
\language=\csname l@#1\endcsname
\fi
#2}}
\providecommand{\BIBdecl}{\relax}
\BIBdecl

\bibitem{shamir1979share}
A.~Shamir, ``How to share a secret,'' \emph{Communications of the ACM},
  vol.~22, no.~11, pp. 612--613, 1979.

\bibitem{blakley1979safeguarding}
G.~R. Blakley, ``Safeguarding cryptographic keys,'' \emph{Proc. of the National
  Computer Conference1979}, vol.~48, pp. 313--317, 1979.

\bibitem{ben1988completeness}
M.~Ben-Or, S.~Goldwasser, and A.~Wigderson, ``Completeness theorems for
  non-cryptographic fault-tolerant distributed computation,'' in
  \emph{Proceedings of the twentieth annual ACM symposium on Theory of
  computing}.\hskip 1em plus 0.5em minus 0.4em\relax ACM, 1988, pp. 1--10.

\bibitem{chaum1988multiparty}
D.~Chaum, C.~Crepeau, and I.~Damgard, ``Multiparty unconditionally secure
  protocols,'' in \emph{Proceedings of the twentieth annual ACM symposium on
  Theory of computing}.\hskip 1em plus 0.5em minus 0.4em\relax ACM, 1988, pp.
  11--19.

\bibitem{beaver1991foundations}
D.~Beaver, ``Foundations of secure interactive computing,'' in \emph{Annual
  International Cryptology Conference}.\hskip 1em plus 0.5em minus 0.4em\relax
  Springer, 1991, pp. 377--391.

\bibitem{canetti2000security}
R.~Canetti, ``Security and composition of multiparty cryptographic protocols,''
  \emph{Journal of CRYPTOLOGY}, vol.~13, no.~1, pp. 143--202, 2000.

\bibitem{cramer2000general}
R.~Cramer, I.~Damgard, and U.~Maurer, ``General secure multi-party computation
  from any linear secret-sharing scheme,'' in \emph{Advances in Cryptology --
  EUROCRYPT 2000}.\hskip 1em plus 0.5em minus 0.4em\relax Springer, 2000, pp.
  316--334.

\bibitem{garay2000secure}
J.~A. Garay, R.~Gennaro, C.~Jutla, and T.~Rabin, ``Secure distributed storage
  and retrieval,'' \emph{Theoretical Computer Science}, vol. 243, no.~1, pp.
  363--389, 2000.

\bibitem{ateniese2006improved}
G.~Ateniese, K.~Fu, M.~Green, and S.~Hohenberger, ``Improved proxy
  re-encryption schemes with applications to secure distributed storage,''
  \emph{ACM Transactions on Information and System Security (TISSEC)}, vol.~9,
  no.~1, pp. 1--30, 2006.

\bibitem{kumar2012publicly}
P.~S. Kumar, M.~S. Ashok, and R.~Subramanian, ``A publicly verifiable dynamic
  secret sharing protocol for secure and dependable data storage in cloud
  computing,'' \emph{International Journal of Cloud Applications and Computing
  (IJCAC)}, vol.~2, no.~3, pp. 1--25, 2012.

\bibitem{shankar2008alternative}
B.~Shankar, K.~Srinathan, and C.~P. Rangan, ``Alternative protocols for
  generalized oblivious transfer,'' in \emph{International Conference on
  Distributed Computing and Networking}.\hskip 1em plus 0.5em minus 0.4em\relax
  Springer, 2008, pp. 304--309.

\bibitem{tassa2011generalized}
T.~Tassa, ``Generalized oblivious transfer by secret sharing,'' \emph{Designs,
  Codes and Cryptography}, vol.~58, no.~1, pp. 11--21, 2011.

\bibitem{desmedt1992shared}
Y.~Desmedt and Y.~Frankel, ``Shared generation of authenticators and
  signatures,'' in \emph{Advances in Cryptology -- CRYPTO91}.\hskip 1em plus
  0.5em minus 0.4em\relax Springer, 1992, pp. 457--469.

\bibitem{desmedt1993threshold}
Y.~Desmedt, ``Threshold cryptosystems,'' in \emph{Advances in Cryptology --
  AUSCRYPT92}.\hskip 1em plus 0.5em minus 0.4em\relax Springer, 1993, pp.
  1--14.

\bibitem{shoup2000practical}
V.~Shoup, ``Practical threshold signatures,'' in \emph{Advances in Cryptology
  -- EUROCRYPT 2000}.\hskip 1em plus 0.5em minus 0.4em\relax Springer, 2000,
  pp. 207--220.

\bibitem{aliasgari2020private}
M.~Aliasgari, O.~Simeone, and J.~Kliewer, ``Private and secure distributed
  matrix multiplication with flexible communication load,'' \emph{IEEE
  Transactions on Information Forensics and Security}, vol.~15, pp. 2722--2734,
  2020.

\bibitem{aliasgari2019distributed}
------, ``Distributed and private coded matrix computation with flexible
  communication load,'' in \emph{2019 IEEE International Symposium on
  Information Theory (ISIT)}.\hskip 1em plus 0.5em minus 0.4em\relax IEEE,
  2019, pp. 1092--1096.

\bibitem{bhattad2005weakly}
K.~Bhattad, K.~R. Narayanan \emph{et~al.}, ``Weakly secure network coding,''
  \emph{NetCod, Apr}, vol. 104, 2005.

\bibitem{kadhe2014weakly}
S.~Kadhe and A.~Sprintson, ``Weakly secure regenerating codes for distributed
  storage,'' in \emph{2014 International Symposium on Network Coding
  (NetCod)}.\hskip 1em plus 0.5em minus 0.4em\relax IEEE, 2014, pp. 1--6.

\bibitem{huang2016communication}
W.~Huang, M.~Langberg, J.~Kliewer, and J.~Bruck, ``Communication efficient
  secret sharing,'' \emph{IEEE Transactions on Information Theory}, vol.~62,
  no.~12, pp. 7195--7206, 2016.

\bibitem{mceliece1981sharing}
R.~J. McEliece and D.~V. Sarwate, ``On sharing secrets and {Reed-Solomon}
  codes,'' \emph{Communications of the ACM}, vol.~24, no.~9, pp. 583--584,
  1981.

\bibitem{shah2015distributed}
N.~B. Shah, K.~Rashmi, and K.~Ramchandran, ``Distributed secret dissemination
  across a network,'' \emph{IEEE Journal of Selected Topics in Signal
  Processing}, vol.~9, no.~7, pp. 1206--1216, 2015.

\bibitem{bitar2017staircase}
R.~Bitar and S.~El~Rouayheb, ``Staircase codes for secret sharing with optimal
  communication and read overheads,'' \emph{IEEE Transactions on Information
  Theory}, 2017.

\bibitem{sperner1928satz}
E.~Sperner, ``Ein satz {\"u}ber untermengen einer endlichen menge,''
  \emph{Mathematische Zeitschrift}, vol.~27, no.~1, pp. 544--548, 1928.

\bibitem{lubell1966short}
D.~Lubell, ``A short proof of sperner's lemma,'' \emph{Journal of Combinatorial
  Theory}, vol.~1, no.~2, p. 299, 1966.

\bibitem{boyd2004convex}
S.~Boyd and L.~Vandenberghe, \emph{Convex optimization}.\hskip 1em plus 0.5em
  minus 0.4em\relax Cambridge university press, 2004.

\bibitem{geller2004circulant}
D.~Geller, I.~Kra, S.~Popescu, and S.~Simanca, ``On circulant matrices,''
  \emph{Preprint, Stony Brook University}, 2004.

\bibitem{dean2009challenges}
J.~Dean, ``Challenges in building large-scale information retrieval systems,''
  in \emph{Keynote of the 2nd ACM International Conference on Web Search and
  Data Mining (WSDM)}, vol.~10, no. 1498759.1498761, 2009.

\bibitem{melnik2010dremel}
S.~Melnik, A.~Gubarev, J.~J. Long, G.~Romer, S.~Shivakumar, M.~Tolton, and
  T.~Vassilakis, ``Dremel: interactive analysis of web-scale datasets,''
  \emph{Proceedings of the VLDB Endowment}, vol.~3, no. 1-2, pp. 330--339,
  2010.

\bibitem{aktas2019load}
M.~F. Aktas, A.~Behrouzi-Far, E.~Soljanin, and P.~Whiting, ``Load balancing
  performance in distributed storage with regular balanced redundancy,''
  \emph{arXiv preprint arXiv:1910.05791}, 2019.

\bibitem{chou2020distributed}
R.~A. Chou and J.~Kliewer, ``Distributed secure storage: Rate-privacy trade-off
  and {XOR}-based coding scheme,'' \emph{arXiv preprint arXiv:2001.04241},
  2020.

\end{thebibliography}
\vspace{-7cm}
\begin{IEEEbiographynophoto}{Mahdi Soleymani}
(S'18) received his B.S. and M.S.
degrees in Electrical Engineering at Sharif University of Technology, Tehran, Iran, in 2014 and 2016,
respectively. He is currently pursuing his Ph.D. degree in Electrical Engineering and Computer Science
at University of Michigan, Ann Arbor. His research
interests lie in the area of algebraic coding theory with applications to distributed storage systems, wireless networks and distributed computing. 
\end{IEEEbiographynophoto}
\vspace{-7cm}
\begin{IEEEbiographynophoto}{Hessam Mahdavifar}
(S'10, M'12)  is an Assistant Professor in the Department of Electrical Engineering and Computer Science at the University of Michigan Ann Arbor. He received the B.Sc. degree from the Sharif University of Technology, Tehran, Iran, in 2007, and the M.Sc. and the Ph.D. degrees from the University of California San Diego (UCSD), La Jolla, in 2009, and 2012, respectively, all in electrical engineering. He was with the Samsung US R\&D between 2012 and 2016, in San Diego, US, as a staff research engineer. 

He received the NSF career award in 2020. He also received Best Paper Award in 2015 IEEE International Conference on RFID, and the 2013 Samsung Best Paper Award. He also received two Silver Medals at International Mathematical Olympiad in 2002 and 2003, and two Gold Medals at Iran National Mathematical Olympiad in 2001 and 2002. His main area of research is coding and information theory with applications to wireless communications, storage systems, security, and privacy. 
\end{IEEEbiographynophoto}

\end{document}